\documentclass[12pt,reqno]{amsart}

\usepackage[numbers]{natbib}

\usepackage[margin=1in, top=1.5in, headsep=25pt]{geometry}
\usepackage{microtype}
\usepackage[bottom]{footmisc}

\usepackage{amsmath,amssymb,amsfonts,amsthm,mathtools}
\usepackage{mathrsfs}

\usepackage{helvet}

\usepackage{graphicx, subcaption}
\usepackage{booktabs, multirow}

\usepackage{xcolor}
\usepackage{hyperref}

\usepackage{caption}
\usepackage{enumitem}%

\usepackage{float}

\hypersetup{
	colorlinks=true,
	linkcolor=blue,
	citecolor=blue,
	urlcolor=blue
}

\newif\ifboldthmnames
\boldthmnamestrue   
\ifboldthmnames
	\newcommand{\thmnotefont}{\bfseries}
\else
	\newcommand{\thmnotefont}{\mdseries}
\fi

\newtheoremstyle{thmplain}{}{}{\itshape}{}{\bfseries}{.}{ }{\thmname{#1}\thmnumber{ #2}\thmnote{ \thmnotefont(#3)}}
\newtheoremstyle{thmdefn}{}{}{\normalfont}{}{\bfseries}{.}{ }{\thmname{#1}\thmnumber{ #2}\thmnote{ \thmnotefont(#3)}}
\theoremstyle{thmplain}

\newtheorem{theorem}{Theorem}[section]
\newtheorem{proposition}[theorem]{Proposition}

\theoremstyle{thmdefn}
\newtheorem{definition}[theorem]{Definition}

\theoremstyle{remark}

\makeatletter
	\def\@settitle{%
		\begin{center}%
			\vspace*{-2em}%
			\usefont{T1}{ptm}{m}{n}\selectfont\LARGE	
			\@title
		\end{center}%
	}
	\def\@setauthors{%
		\begin{center}%
			\vspace*{3em}%
			\selectfont\large
			\authors $^1$
		\end{center}%
	}
\makeatother

\makeatletter
	\patchcmd{\abstract}{\abstractname.}{\small\textbf{\abstractname.}}{}{}
\makeatother

\makeatletter
	\def\@secnumfont{\bfseries}
		\renewcommand{\section}{\@startsection{section}{1}{\z@}{-4.0ex \@plus -1ex \@minus -.2ex}{1ex}{\normalfont\Large\bfseries\raggedright}}
\makeatother

\makeatletter
	\renewcommand{\subsection}{\@startsection{subsection}{2}{\z@}{-3.5ex \@plus -1ex \@minus -.2ex}{1ex}{\normalfont\large\bfseries}}
\makeatother

\title[Newton's Second Law: A Theoretical Identity]{Newton's Second Law: A Theoretical Identity Derived from the Principle of Excluded Perpetual Motion and the Weak Equivalence Principle
}

\author{Lars Nordmann}

\begin{document}

\maketitle

\footnotetext[1]{\hangindent=2.1em Email address: lars.nordmann@rutgers.edu\\}

\vspace{-1em}
\begin{center} {\fontsize{13pt}{15pt}\selectfont June 2026} \end{center}
\vspace{3em}

\begin{abstract}{\small
	For more than three centuries, Newton's Second Law ($ F=ma $) has governed mechanics with undisputed success, yet its epistemic authority has rested on empirical adequacy alone. That empirical contingency vanishes under two physical principles---the Principle of Excluded Perpetual Motion (PEPM) and the Weak Equivalence Principle (WEP)---from which the law emerges as a structural necessity of admissible mechanics. Under the PEPM constraint, Suppes' operational measurement protocol grounds gravitational mass and force as independent primitives. The fusion of Stevin's static and Galileo's kinematic inclined-plane analyses establishes $ F/a $ as a well-defined, object-intrinsic quantity---the operational definition of inertial mass. The WEP compels the equivalence of inertial and gravitational mass without presupposing Newton's Second Law. Substituting this equivalence into the definition of inertial mass yields the law as a theoretical identity between independently defined quantities. The derivation is anchored in gravity because weight uniquely bridges statics and kinematics, yet the result is independent of the underlying force mechanism. The derivation carries structural consequences: modified-inertia formulations of MOND are inadmissible under the PEPM--WEP constraint; the location-invariance of the Kibble-balance realization of the SI kilogram is secured on first principles; and under the PEPM alone, torsion-balance and free-fall experiments constitute equally direct tests of the WEP.  More fundamentally, Newton's Second Law is not an irreducible axiom of mechanics, but a structural consequence of deeper physical principles: energy conservation and the Einstein Equivalence Principle.

	\vspace{0.5em}
	\noindent\textbf{\textit{Keywords: }} Newton's Second Law, Classical Mechanics, Foundations of Physics, Weak Equivalence Principle, Inertial Mass, Measurement Theory, History of Physics, Modified Newtonian Dynamics (MOND)}
\end{abstract}

\section{Introduction}\label{sec:introduction}
At the foundation of classical dynamics, Newton's Second Law \cite{newton1687} embodies an enduring paradox: despite its undisputed success across vast scales---from molecular to astronomical---its epistemic basis remains empirical. The law’s pervasive application is therefore warranted by empirical adequacy rather than physical necessity. This epistemic gap has direct consequences.

It is precisely that lack of first-principles grounding that bifurcates admissible dynamics into two regimes: one that presupposes the law and one that permits its modification. The first regime encompasses all applications of Newtonian dynamics, even when its assumption remains tacit. For example, the Kibble balance provides a local realization of the SI kilogram, yet its cross-location reproducibility presupposes $ W=mg $---the free-fall instance of $ F=ma $---in its weighing mode. The second regime encompasses modified-inertia formulations of MOND \cite{milgrom1983}, which deliberately relax the universality of $ F=ma $ on the premise that Newton's Second Law is not compelled by deeper physical constraints.  

The same epistemic gap also bears on the justification of inferences drawn in experimental practice. In Weak Equivalence Principle (WEP) testing, the WEP is probed directly through free fall (e.g. MICROSCOPE) and indirectly through mass equivalence (e.g. E{\"{o}}t-Wash). Interpreting the latter as WEP tests remains contingent on assuming Newton's Second Law---an asymmetry resolved here under the Principle of Excluded Perpetual Motion (PEPM), placing both on an equal epistemic footing.

This lack of first-principles grounding has kept the foundations of Newton’s Second Law under sustained scrutiny. In the empiricist tradition, Mach \cite{mach1883} exposed the law's inherent circularities; Poincar{\'e} \cite{poincare1902} questioned whether laws express truths or conventions; Bridgman \cite{bridgman1927} demanded rigorous operational definitions; and Suppes \cite{suppes1951} formalized a measurement framework. In the rationalist tradition, Krantz \cite{krantz1973} extended Suppes' template to force vectors and posited dynamical axioms to derive isomorphic force-acceleration collinearity, expressly leaving unaddressed whether $F/a$ is object-intrinsic. McKinsey, Sugar, and Suppes \cite{mckinsey1953} axiomatized Newtonian mechanics but took mass and force as primitives and posited $ F=ma $ as an axiom, thereby forgoing operational grounding---a limitation inherited by later axiomatizations, including Suppes \cite{suppes1974} and Noll \cite{noll1959}. More recent foundational analyses---e.g., van Fraassen \cite{fraassen2008}---underscore that the epistemic foundation of Newtonian dynamics remains unresolved. 

Nor does special relativity supply the first-principles grounding: in the Galilean limit ($ v \ll c $), it recovers $ F=ma $ only in form, not in foundation.

The inference that doubling applied force doubles acceleration has remained empirical, not deduced. Newton's Second Law therefore endures as a seemingly irreducible axiom---accepted by convention rather than derived by necessity.

This reconstruction is neither an axiomatization nor a reinterpretation, but a systematic derivation of Newton's Second Law from the PEPM and the WEP (Fig.~\ref{fig:fig1}). Under the PEPM, the applicability of Suppes' measurement framework \cite{suppes1951} to gravitational mass and force is secured, thereby establishing them as independent primitives and formally grounding the inclined-plane analyses of Stevin \cite{stevin1586} and Galileo \cite{galileo1638}. Traditionally treated as separate tracks, their structural synthesis bridges statics and kinematics, establishing inertial mass as a well-defined intrinsic quantity. The WEP then compels the equivalence of inertial and gravitational mass, from which Newton's Second Law follows as a theoretical identity.

Newton's Second Law therefore stands as a necessary consequence of physical first principles, realized through operational measurement and inclined-plane geometry. Structural consequences include the inadmissibility of modified-inertia MOND, the first-principles grounding of the SI kilogram through the Kibble balance realization, the equal directness of WEP tests, and the grounding of Newton's Second Law in principles deeper than mechanics itself.

\bigskip
\noindent\textbf{Idealization.} All derivations in this reconstruction treat bodies as point masses and
assume idealized supporting elements---frictionless and massless pulleys, levers, and springs---thereby isolating the underlying mechanical structure.

\bigskip
\noindent\textbf{Frame of reference.} All quantities and relations are defined in instantaneous rest frames, where they remain uniquely specified.

\begin{figure}[h]
	\centering
	\includegraphics[width=0.86\textwidth]{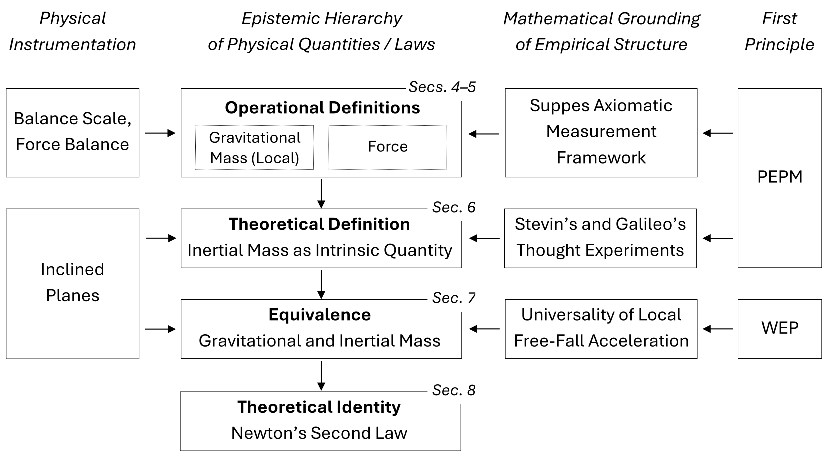}
	\caption{\small \textbf{Deductive Reconstruction of Newton's Second Law.} The figure traces the deductive chain from operational definitions of gravitational mass and force (admissible under the PEPM), through the Stevin-Galileo synthesis establishing inertial mass as an object-intrinsic quantity, and the equivalence of inertial and gravitational mass (established under the WEP), to the theoretical identity $ F=ma $.}\label{fig:fig1}
\end{figure}

\section{Physical First Principles}\label{sec:1stprinciples}
The derivation proceeds from two principles that exclude unphysical behavior and thereby constrain admissible dynamics: the Principle of Excluded Perpetual Motion (PEPM) and the Weak Equivalence Principle (WEP).

\bigskip
\noindent\textit{Principle of Excluded Perpetual Motion (PEPM):} No cyclic mechanical process can raise a weight indefinitely without external input (perpetual motion of the first kind).

\bigskip
\noindent\textit{Weak Equivalence Principle (WEP):} Locally, all bodies share the same free-fall acceleration, independent of mass or composition. This is Einstein's principle of Universal Free-Fall (UFF) \cite{einstein1907}---not to be conflated with mass-equivalence, since that would presuppose Newton's Second Law.

\section{Suppes' Axiomatic Framework for Measurement of Extensive Quantities}\label{sec:suppes}
Mathematical expression of physical order presupposes numerical representation of the corresponding physical quantities. Such representation requires a framework in which measurement rests on empirically realizable operations and preserves the quantities' empirical structure. Suppes \cite{suppes1951} supplies precisely such a framework, formalizing measurement through comparison and concatenation under axioms that secure coherent representation. Beyond the operational grounding of measurement, this framework endows the representation with the formal structure required for the Stevin--Galileo synthesis.  
 
\subsection{Suppes' Representation Theorem and Measurement Protocol}\label{subsec:suppesRTMP}

Suppes' framework comprises a representation theorem securing the existence of a coherent numerical representation, and a measurement protocol that realizes it operationally.

\bigskip
\noindent\textbf{Suppes' Representation Theorem.} 
Let $K$ be a nonempty set, $Q$ a comparison relation on $K$, and $\ast$ a concatenation operation on $K$. If $\langle K,Q,\ast\rangle$ satisfies Suppes' axioms, there exists a mapping $\mu:K \to \mathbb{R}^+$, unique up to a multiplicative scale factor, such that:
\begin{align}
	\makebox[0.2\textwidth][l]{} &
	\makebox[0.4\textwidth][c]{$\mu(x\ast y) = \mu(x) + \mu(y)$} \label{eq:suppesadd} & \makebox[0.2\textwidth][l]{\text{(additive)}} \\
	\makebox[0.2\textwidth][l]{} &
	\makebox[0.4\textwidth][c]{$x \mathrel{Q} y \iff  \mu(x) \le \mu(y)$}\label{eq:suppesord} & \makebox[0.2\textwidth][l]{\text{(order-embedding)}}
\end{align}
Scale freedom corresponds to the choice of unit $e$; once $e$ is fixed, $\mu$ is unique. Define $x \sim y$ iff $x \mathrel{Q} y$ and  $y \mathrel{Q} x$; then $\sim$ partitions $K$ into $\mu$-indexed equivalence classes.

\bigskip
\noindent\textbf{Suppes' Measurement Protocol.}
Fix a unit $e \in K$. For $x \in K$ set $S_{[x][e]}:=\{m/n \mid nx \mathrel{Q} me\}$. Then $f_e(x) := \inf S_{[x][e]}$ is the unique additive, order-embedding numerical representation of $x$ guaranteed by the Representation Theorem. The representation admits operational realization through a systematically convergent measurement protocol: for each  $n \in \mathbb{N}$ determine the least integer $m_x(n)$ such that $nx \mathrel{Q} m_x(n)e$. Then $\frac{m_x(n)}{n}-\frac{1}{n} < f_{e(x)} \le \frac{m_{x(n)}}{n}$, yielding increasingly precise bounds for $f_e(x)$ as $n \to \infty$. 

\subsection{Modified Suppes Axioms}\label{subsec:modsuppes}
Suppes' axioms in \cite{suppes1951} are formally sufficient, but Axiom V encodes infinite divisibility---an idealization physically untenable for mass. It is therefore replaced by Connectedness and Right-Cancellability; all other axioms remain unchanged. Table~\ref{tab:modsuppes} lists the modified axioms; Proposition~\ref{prop:axiomVsufficiency} establishes the formal sufficiency of the modified axioms.

\medskip
\begin{table}[h]
	\centering
	\caption{\small \textbf{Modified Suppes Axioms.} Axiom V of \cite{suppes1951} is replaced by \emph{Connectedness} and \emph{Right-Cancellability}; the remaining axioms are unchanged.}
	\label{tab:modsuppes}
	\begin{tabular}{@{}llll@{}}
		\toprule
		\multicolumn{2}{@{}l}{\textbf{Modified Suppes Axioms}} & & \textbf{relation to Suppes \cite{suppes1951}} \\
		\midrule
		M.1 & Closure under $\ast$: & $x \ast y \in K$                                              			& Suppes' A.II \\
		M.2 & Transitivity:         & $x \mathrel{Q} y$ and $y \mathrel{Q} z \implies x \mathrel{Q} z$ 			& Suppes' A.I \\
		M.3 & Associativity:        & $(x \ast y) \ast z \mathrel{Q} x \ast (y \ast z)$           		 		& Suppes' A.III \\
		M.4 & Compatibility:        & $x \mathrel{Q} y \implies x \ast z \mathrel{Q} z \ast y$     				& Suppes' A.IV \\
		M.5 & Connectedness:        & $x \mathrel{Q} y$ or $y \mathrel{Q} x$  				     				& replaces Suppes' A.V \\
		M.6 & Right-Cancellability: & $x \ast z \mathrel{Q} y \ast z \implies x \mathrel{Q} y$     				& replaces Suppes' A.V \\
		M.7 & Positivity:           & $\lnot(x \ast y \mathrel{Q} x)$                             				& Suppes' A.VI \\
		M.8 & Archimedean:          & If $x \mathrel{Q} y$ then $\exists n \in \mathbb{N}: y \mathrel{Q} nx$	& Suppes' A.VII \\
		\bottomrule
	\end{tabular}
\end{table}

\medskip
\begin{proposition}[Formal Sufficiency of Modified Suppes Axioms]
	\label{prop:axiomVsufficiency}
	Replacing Suppes' Axiom V in \cite{suppes1951} with Connectedness (M.5) and Right-Cancellability (M.6) preserves the formal sufficiency of the modified axiom set.
\end{proposition}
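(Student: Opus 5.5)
The plan is to prove Proposition~\ref{prop:axiomVsufficiency} directly: I would show that the Representation Theorem of \S\ref{subsec:suppesRTMP} still holds under M.1--M.8, using the protocol map $f_e(x)=\inf S_{[x][e]}$ as the representation. I would not try to recover Suppes' Axiom V itself, since it is solvability and cannot hold in a non-divisible domain. Instead I would identify every place in Suppes' argument where Axiom V is used, and check that each use only extracts totality of the order or cancellation, which M.5 and M.6 now supply. In effect the claim becomes a Hölder-type embedding theorem for a positive, Archimedean, weakly ordered semigroup.

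\emph{Step 1: algebraic preliminaries.}
\begin{itemize}
\item From M.2 and M.5, $Q$ is a total preorder, and $\sim$ is an equivalence relation.
\item Setting $x=y$ in M.4 gives $x\ast z \mathrel{Q} z\ast x$, so $\ast$ is commutative up to $\sim$.
\item Combining this with M.4 shows that $x\mathrel{Q}y$ implies both $x\ast z\mathrel{Q}y\ast z$ and $z\ast x\mathrel{Q}z\ast y$.
\item M.6, applied with commutativity, gives two-sided cancellation. Hence the strict relation $x<y$ (meaning $\lnot(y\mathrel{Q}x)$) is preserved and reflected by concatenation.
\item Using M.3, $nx$ is well defined up to $\sim$, and $mx\ast nx\sim(m+n)x$.
\item M.7 together with M.5 gives $x<x\ast y$. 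By induction, $n\mapsto nx$ is strictly increasing.
\end{itemize}

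\emph{Step 2: well-definedness of $f_e$.} Positivity and the Archimedean axiom M.8, applied to $x$ and $e$, show that $S_{[x][e]}$ is nonempty and bounded below by a positive number. I would then check, by a cross-multiplication argument using Step 1, that $nx\mathrel{Q}me$ implies $knx\mathrel{Q}kme$, and conversely via cancellation. This makes membership of $m/n$ in $S_{[x][e]}$ independent of the representative fraction. It also justifies the bracket $\frac{m_x(n)-1}{n}<f_e(x)\le\frac{m_x(n)}{n}$ stated in the Measurement Protocol.

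\emph{Step 3: order-embedding and additivity.}
\begin{itemize}
\item The direction $x\mathrel{Q}y\Rightarrow f_e(x)\le f_e(y)$ is immediate, because $S_{[y][e]}\subseteq S_{[x][e]}$.
\item For additivity, I would combine $nx\mathrel{Q}m_x e$ and $ny\mathrel{Q}m_y e$ via monotonicity and commutativity to get $n(x\ast y)\mathrel{Q}(m_x+m_y)e$, which gives $f_e(x\ast y)\le f_e(x)+f_e(y)$. The reverse inequality uses minimality of $m_x(n)$ and $m_y(n)$ together with totality (M.5) to obtain lower bounds of the form $(m_x-1)e<nx$. Letting $n\to\infty$ in the $1/n$ error bounds then gives equality.
\item Uniqueness up to scale follows in the standard way: any additive order-embedding $\mu$ must satisfy $n\mu(x)\le m\mu(e)$ exactly on $S_{[x][e]}$, so $\mu=\mu(e)f_e$.
\end{itemize}

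\emph{The main obstacle} is the converse direction of order-embedding: showing that $x<y$ forces $f_e(x)<f_e(y)$ strictly. Suppes uses Axiom V here to produce a difference $z$ with $x\ast z\sim y$, which we no longer have. My first attempt would be an Archimedean separation argument.
\begin{enumerate}
\item Show that for $x<y$ there is an $n$ such that $nx$ and $ny$ are separated by at least one full copy of $e$, i.e.\ $nx\ast e\mathrel{Q}ny$.
\item Right-cancellability and M.5 give $nx<ny$ for every $n$.
\item If the gap never reached a copy of $e$, the increments $ny$ versus $nx$ would remain bounded by $e$ for all $n$. I would then try to contradict M.8 by a pigeonhole on the protocol integers $m_x(n)$ and $m_y(n)$, obtaining $m_y(n)-m_x(n)\ge 1$ for some $n$.
\end{enumerate}
If this fails in full generality, the fallback is to take as the unit a $Q$-minimal element where one exists, as is natural in a non-divisible mass domain. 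Then $x<y$ directly gives $x\ast e\mathrel{Q}y$ and the strict inequality follows at once. I regard this separation step as the point where the replacement of Axiom V genuinely has to be justified.
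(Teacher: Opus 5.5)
\textbf{There is a genuine gap, and it cannot be closed: the step you flag is false under M.1--M.8.} Let $K=\{(m,k)\in\mathbb{Z}^2 : m\ge 1\}$, let $\ast$ be componentwise addition, and let $x \mathrel{Q} y$ mean that $x$ is lexicographically no larger than $y$, with the first coordinate dominant. Since $\ast$ is associative and commutative and the order is total and translation-invariant, M.2--M.6 hold. Since $x\ast y$ has strictly larger first coordinate than $x$, M.7 holds. M.1 is clear, and $nx$ eventually exceeds any $y$ in the first coordinate, so M.8 holds.

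Take $e=x=(1,0)$ and $y=(1,1)$. Then $nx\ast e=(n+1,0)$ lies strictly above $ny=(n,n)$ for every $n$, so the separation in your item~1 never occurs. Moreover $m_x(n)=n$ and $m_y(n)=n+1$. So the pigeonhole conclusion $m_y(n)-m_x(n)\ge 1$ of item~3 does hold, yet $f_e(x)=f_e(y)=1$ although $\lnot(y \mathrel{Q} x)$. A bounded difference of protocol integers never forces a strict inequality of the infima; you would need growth linear in $n$. (Incidentally, $(m_y(n)-1)/n=1=f_e(y)$, so the strict lower bound in your Step~2 bracket also fails; only $\le$ follows from M.1--M.8.)

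Nor is this an artefact of $f_e$. Additivity forces any $\mu$ to have the form $\mu(m,k)=\alpha m+\beta k$. Requiring $(1,k)$ to lie below $(2,0)$ for every $k\in\mathbb{Z}$ forces $\beta=0$, which collapses $(1,0)$ and $(1,1)$. So no additive order-embedding exists at all. The idea missing from item~3 is that the gap between $nx$ and $ny$ is not itself an element of $K$, so M.8 has nothing to act on. Producing such an element is exactly the job Axiom V did in Suppes' argument.

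Your fallback fails for the same reason. On the sub-semigroup with $k\ge 0$, the unit $e=(1,0)$ is $Q$-minimal, yet $x\ast e=(2,0)$ lies strictly above $y=(1,1)$. Minimality of $e$ yields $x\ast e \mathrel{Q} y$ only if $y\sim x\ast z$ for some $z\in K$, i.e.\ only under a solvability or discreteness hypothesis that M.1--M.8 do not contain.

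The paper's own proof takes a different route from yours. It only checks that the invocations of Axiom V in Suppes' Theorems 1, 5, 6 and 16 (reflexivity, connectedness, right-cancellation, multiple cancellation) are recovered from M.5 and M.6, which your Step~1 also achieves. It never addresses the strict order-embedding step, so the same example refutes the proposition as stated. This is a defect of the claim, not only of your proposal.

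To make either argument go through, the axioms must be strengthened. One option is to replace M.8 with the Archimedean axiom for closed extensive structures of Krantz, Luce, Suppes and Tversky: if $\lnot(y \mathrel{Q} x)$, then for all $z,w\in K$ there is $n$ with $nx\ast w \mathrel{Q} ny\ast z$. This excludes the lexicographic model and, together with M.1--M.7, yields the representation. Another is to add a discreteness axiom making every element $\sim$ to a multiple of a minimal unit.
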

\begin{proof}
	In Suppes \cite{suppes1951}, Axiom V is invoked only in Theorems 1, 5, 6, and 16. Each instance is subsumed by the modified axioms. Reflexivity in Theorem 1 and connectedness in Theorem 5 follow directly from M.5; right-cancellability in Theorem 6 follows directly from M.6. Multiple-cancellation in Theorem 16 follows by reduction: if $\lnot(x \mathrel{Q} y)$, then $y \mathrel{Q} x$ by M.5. Hence $n y \mathrel{Q} (n-1) y \ast x$ by M.4 and $n x \mathrel{Q} (n-1) y \ast x$ by M.2. Then $(n-1) x \mathrel{Q} (n-1) y$ by M.6, and thus $x \mathrel{Q} y$ by iterated reduction.
\end{proof}

\bigskip
\noindent Under the modified axiom set, Suppes' framework admits a coherent numerical representation from empirically realizable comparison and concatenation operations---the formal basis for the operational definitions of gravitational mass and force.

\section{Operational Definition of Gravitational Mass}\label{sec:opdefmass}
Gravitational mass is the first primitive of the reconstruction and therefore requires operational definition. Suppes' framework supplies that definition while reproducing the exact numerical representation of conventional weighing. It thus formalizes that practice rather than revising it.

\begin{definition}[Formalization of Gravitational Mass]
	\label{def:gravmass}
	Suppes' operations for gravitational mass are realized by the canonical operations of the balance scale: comparison $Q$ as weighing ''no more than,'' and concatenation $\ast$ as placing two objects jointly on the same pan.
\end{definition}
The corresponding measurement protocol assigns to an object $B$, at position $L$, the gravitational mass $m_L^G(B) = \inf\{m/n \mid n B \mathrel{Q} m e_M \} \mathrm{kg}$, where $e_M$ denotes a conventional $1$-kg mass transfer standard (e.g., realized under the 2019 SI). Since the measurement protocol is inherently local, gravitational mass is position-indexed to leave open the possibility---though not the presumption---of dependence on $L$.

\subsection{Admissibility of Modified Suppes Axioms}\label{subsec:admissibilitysuppesmass}
Suppes' operational definition is well-defined and carries mathematical structure only if the balance-scale operations satisfy the modified Suppes axioms. These axioms fall into two types: structural axioms (M.2--M.7) governing the domain's order-additive form, and domain-bounding axioms (M.1, M.8) constraining the class of admissible objects.

Structural axioms (M.2--M.7) encode empirical physical behavior and must therefore be deduced from physical principles. Domain-bounding axioms, by contrast, must be justified by the structure of the theory's admissible domain. For gravitational mass, the Archimedean property (M.8) excludes objects of zero or infinite mass, exclusions coextensive with those of Newtonian dynamics, while closure (M.1) follows because the concatenation of two such objects is again one of the same kind. Accordingly, Proposition 4.2 establishes the admissibility of the modified Suppes axioms by deriving structural axioms from the PEPM, with domain-bounding axioms taken as given.

\begin{proposition}[Admissibility of Canonical Balance-Scale Operations] \label{prop:admissibilitysuppesmass}
	 Under the PEPM, the canonical balance-scale operations at an arbitrary but fixed position $L$, on the domain of objects constrained by closure (M.1) and the Archimedean property (M.8), satisfy the modified Suppes axioms.
\end{proposition}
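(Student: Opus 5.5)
The plan is to verify the structural axioms M.2--M.7 one by one for the balance-scale realization at a fixed position $L$, taking closure (M.1) and the Archimedean property (M.8) as given. Each verification reduces to a single physical claim: no configuration of objects on the pans can be exploited to raise a weight cyclically without input. The workhorse is a standard construction. Suppose some axiom failed. Then one could assemble a cycle of placements, exchanges, and transfers of objects between pans, each step performed at the same height and hence at no net cost. At the end of the cycle the objects would return to their initial arrangement, but one pan would have descended with a weight on the other side raised. Repeating the cycle would raise that weight indefinitely, contradicting the PEPM.

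I would first read ``$x \mathrel{Q} y$'' as: with $x$ on the left pan and $y$ on the right, the beam does not tilt toward $x$. The steps then run in this order.

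\textbf{Connectedness (M.5).} A two-pan balance at rest either tilts toward one side or stays level. So one of $x \mathrel{Q} y$ or $y \mathrel{Q} x$ always holds, and the PEPM is needed only to exclude indeterminate or hysteretic outcomes.

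\textbf{Transitivity (M.2).} Suppose $x \mathrel{Q} y$ and $y \mathrel{Q} z$ but $z$ is strictly lighter than $x$. Chain the three comparisons into a cyclic exchange: $x$ descends against $y$, $y$ against $z$, and $z$ is lifted back against $x$. The cycle would yield net lift, which the PEPM forbids.

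\textbf{Compatibility and associativity (M.4, M.3).} For M.4 I would show that joint placement on one pan is order-independent and that adding the same object $z$ to both pans cannot reverse the tilt. If it did, then adding and removing $z$ at equal heights would pump energy. M.3 then follows by regarding $(x\ast y)\ast z$ and $x\ast(y\ast z)$ as the same physical load.

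\textbf{Right-cancellability (M.6).} This is the converse of M.4: remove the common $z$ from both pans and run the same cycle in reverse.

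\textbf{Positivity (M.7).} I would argue that an object raising the other pan when added, or one that could be added at no cost, would let a weight be lifted by repeatedly appending and removing it.

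The main obstacle is making precise what ``cyclic mechanical process'' means for static comparisons. The PEPM speaks of raising weights, whereas $Q$ is only an equilibrium relation, so each step needs an explicit bridge from ``the beam tilts'' to ``a mass can be lowered on one side while another is raised.'' My first attempt would be an idealized lever argument. A strict tilt under a given loading means the beam, released from level, raises the lighter pan. The comparison configuration is then restored by moving loads horizontally at equal heights, which is costless under the frictionless, massless idealization. With that bridge in place, each axiom violation produces an explicit energy-gaining cycle.

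I expect positivity and associativity to sit least comfortably. They partly express that concatenation is physically well-defined, rather than being pure consequences of the PEPM, so I may need to lean on the stated idealizations (point masses, massless supports) there.
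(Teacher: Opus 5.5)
Your plan is essentially the paper's proof. M.1 and M.8 come from the domain stipulation, and M.2--M.7 come from PEPM cycle arguments: the same three-step exchange for transitivity, and short regrouping/free-lifting cycles for M.3 and M.7. Reading M.5 off the tilt-or-level outcome of the balance, and M.3 off the physical identity of the regrouped load, are only minor variations; the paper folds these into its cycle arguments.

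One correction: as written, your transitivity cycle runs uphill, because $x \mathrel{Q} y$ means $x$ is no heavier than $y$, so "$x$ descends against $y$" costs input. The self-driving orientation, as in the paper, starts with $x$ low and $y,z$ high. Let $y$ descend against $x$, then $z$ against $y$, then $x$ against $z$. The last step is strictly spontaneous, and it is the one coupled to raise a weight.
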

\begin{proof}
	 M.1 and M.8 are satisfied by the stipulated domain. For M.2, suppose $x \mathrel{Q} y$, $y \mathrel{Q} z$ but $\lnot(x \mathrel{Q} z)$. Starting with $y$ and $z$ at higher and $x$ at lower elevations, consider the closed 3-step cyclic exchange: $x$ vs $y$, $y$ vs $z$, and $z$ vs $x$. By $x \mathrel{Q} y$ and $y \mathrel{Q} z$, the first two exchanges proceed either reversibly or spontaneously, whereas the last proceeds strictly spontaneously since $\lnot(x \mathrel{Q} z)$. That last exchange can be coupled to raise a weight, thereby violating the PEPM. M.4--M.6 follow by the same cycle argument. M.3 and M.7 require only two-step arguments: if M.3 were violated, mere regrouping of objects on a balance pan would permit a spontaneous cycle with net lift, violating the PEPM; if M.7 were violated, then $x \ast y \mathrel{Q} x$ would allow $y$ to be lifted in a spontaneous cycle, again violating the PEPM.	
\end{proof}

\bigskip
\noindent Thus, under the PEPM, the conditions for Suppes' Representation Theorem are met. With the unit fixed by the $1$-kg mass transfer standard, Suppes' protocol for the balance-scale yields the unique additive, order-embedding numerical representation of gravitational mass. Since the operational definition is local in scope, gravitational mass remains provisionally indexed to position: an object judged heavier at one position might not be heavier at another. Positional invariance---and hence object-intrinsicness---is not assumed here; it emerges only later through equivalence with inertial mass, established under the WEP.

\subsection{Suppes-Conventional Coincidence in Numerical Representation} \label{subsec:conv_suppes_coincidence}
Suppes' weighing protocol does not merely formalize measurement; it numerically recovers the outcome of conventional weighing while securing the internal consistency of that practice.
\begin{proposition}[Representational Equivalence]
	Under the PEPM, on the physical domain of objects constrained by closure (M.1) and the Archimedean restriction (M.8), the conventional protocol is well-defined and coincides numerically with Suppes'  protocol. 
\end{proposition}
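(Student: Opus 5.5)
The argument has two parts. First I fix what the ``conventional protocol'' means. Then I show it coincides with $f_{e_M}$. By the conventional protocol I mean the standard weighing practice: a set of calibrated weights is built from the unit $e_M$, namely multiples $m e_M$ and submultiples $d$ with $n d \sim e_M$. The object $B$ is balanced against a combination of these, and the nominal value of that combination is read off. Well-definedness requires two things. The submultiples must be uniquely specified up to $\sim$, and the reading must not depend on which combination of calibrated pieces achieves balance. Both follow from Proposition~\ref{prop:admissibilitysuppesmass}: under the PEPM the balance-scale operations satisfy M.1--M.8, so the Representation Theorem supplies an additive, order-embedding $\mu$ with $\mu(e_M)=1$. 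Any two pieces $d,d'$ with $nd\sim e_M\sim nd'$ then satisfy $\mu(d)=\mu(d')=1/n$. Any two balancing combinations $c,c'$ satisfy $c\sim B\sim c'$, so by additivity their nominal sums agree.

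Second, I show numerical coincidence. Suppose $B$ balances a combination $c$ of nominal value $r=\sum_i k_i/n_i$. Order-embedding gives $\mu(B)=\mu(c)$, and additivity gives $\mu(c)=\sum_i k_i\mu(d_i)=r$. The Suppes protocol output $f_{e_M}(B)=\inf S_{[B][e_M]}$ is, by the Representation Theorem and its uniqueness once the unit is fixed, equal to $\mu(B)$. Hence the conventional reading equals $f_{e_M}(B)$.

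The main obstacle is that exact balance against a finite combination is generally unavailable. A real weight set has finitely many pieces, and connectedness (M.5) without divisibility need not furnish exact submultiples. I would handle this by defining the conventional reading as a limit. At resolution $1/n$, the conventional bracket is the least $m$ with $B\,Q\,m d_n$, where $d_n$ is a piece with $n d_n\sim e_M$ if one exists. Failing that, I would use the Suppes form $nB\,Q\,m e_M$, to which it is equivalent by M.4 and Theorem 16 (multiple cancellation, proved in Proposition~\ref{prop:axiomVsufficiency}). Multiplying through by $n$ via compatibility and cancellation shows the two brackets coincide. The conventional bounds therefore match the Suppes bounds $\frac{m(n)-1}{n}<f_{e_M}(B)\le\frac{m(n)}{n}$ at every $n$, and the two limits agree. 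The delicate point is verifying that $B\,Q\,m d_n \iff nB\,Q\,m e_M$ using only M.2--M.6 and multiple cancellation. I would prove this by induction on $n$ using associativity (M.3) to regroup concatenations, with no appeal to infinite divisibility.
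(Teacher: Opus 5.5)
Your first two parts are essentially the paper's proof: reference pieces get their denominations from $q\,r\sim p\,e_M$, and $B\sim R$ yields $\mu(B)=\mu(R)=T(R)$ by order-embedding and additivity, with well-definedness inherited from $\mu$. Your limit-based treatment of bodies that no finite combination balances exactly (e.g.\ irrational $\mu(B)$) goes beyond the paper, which handles only exact balance; note that $B \mathrel{Q} m d_n \iff nB \mathrel{Q} m e_M$ follows at once from the representation ($\mu(B)\le m/n \iff n\mu(B)\le m$), so the planned induction on $n$ from the axioms is unnecessary.
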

\begin{proof}
	Under the conventional weighing protocol, reference weights $r_{(p/q)}$ are assigned denomination $p/q$ iff $q r_{(p/q)} \sim p e_M$, where $\sim$ denotes equilibrium, and the gravitational mass of an object $B$ is determined by balancing $B$ against a multiset $R=\{r_{p_i/q_i}\}_{i=1}^{n}$ of reference weights, assigning to $B$ the denomination total $T(R)=\sum_{i=1}^{n}p_i/q_i$. On the set of reference weights, conventional and Suppes' weighing produce identical numerical representations, since both are generated by the same equilibrium operations. Thus, $T(r_{p/q})=p/q=\mu(r_{p/q})$, where $\mu$ is Suppes' representation. Now let $B$ be balanced by a multiset $R= \{r_{p_i/q_i}\}_{i=1}^{n}$ of reference weights, i.e. $B \sim R$. The order-embedding property of Suppes' representation yields $\mu(B) = \mu(R)$, and its additivity yields $\mu(R) = \sum_{i=1}^{n}\mu(r_{p_i/q_i}) = \sum_{i=1}^{n}p_i/q_i$. Hence $\mu(B)=T(R)$. Since $\mu$ is well-defined, conventional weighing is likewise well-defined; i.e. the assigned value is independent of the choice of the balancing multiset.
\end{proof}

\bigskip
\begin{figure}[h]
	\centering
	\begin{subfigure}[b]{0.35\textwidth}
		\centering
		\includegraphics[width=\textwidth]{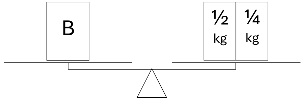}
		\captionsetup{justification=centering}
		\caption*{\Small $m(B)=1/2 \, kg + 1/4 \, kg = 3/4 \, kg$\\ \vspace{5pt} \small (a) Conventional protocol}
	\end{subfigure}
	\hfil
	\begin{subfigure}[b]{0.35\textwidth}
		\centering
		\includegraphics[width=\textwidth]{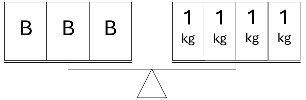}
		\captionsetup{justification=centering}
		\caption*{\Small $m(B)=3/4 \, kg$\\ \vspace{5pt} \small (b) Suppes' protocol}
	\end{subfigure}
	\caption{\small \textbf{Operational Realization of Gravitational Mass.} (a) Conventional protocol: the gravitational mass of an object is determined by balancing it against a multiset of reference weights. (b) Suppes' protocol: gravitational mass is determined via comparison of multiple copies of the object against multiple copies of the $1$-kg mass transfer standard. Under Suppes' axioms, both protocols coincide in their numerical representation of gravitational mass.}
	\label{fig:fig2}
\end{figure}

\noindent Accordingly, Suppes' protocol for gravitational mass preserves the numerical outcome of conventional weighing, while making explicit the formal structure implicit in that practice. Positional invariance---and hence object-intrinsicness---is not yet claimed.

\section{Operational Definition of Force}\label{sec:opdef_froce}
Force is the second primitive and likewise requires operational definition, again supplied by Suppes' framework. 

The force in Newton's Second Law refers to the resultant---the sole determinant of a body's dynamical response. The operational definition must therefore target that resultant: the single-force realization comprising at least one driving force and, where present, constraint forces acting at a single material point. Since $F=ma$ concerns only the dynamical effect of the resultant, how the resultant arises from contributing forces is immaterial. Hereafter, such resultants are referred to simply as forces. The direction of a force is its line of action. Its magnitude requires numerical representation of its capacity to sustain or disrupt equilibrium against another force acting along the same line. Applying Suppes' framework to force magnitudes therefore requires redirecting the lines of action: antiparallel for comparison, parallel for concatenation, realized operationally by fixed pulleys.

\begin{definition}[Formalization of Force]
	\label{def:force}
	Suppes' operations for force magnitude are realized by the canonical operations of the pulley-coupled force-balance: comparison $Q$ as the relation ''no stronger than,'' and concatenation $\ast$ as combining two parallel forces.
\end{definition}

\noindent Under Suppes' measurement protocol, the magnitude of a force $A$ is $F(A) = \inf \{m/n \mid n A \mathrel{Q} m e_F \}$. Here, $e_F$ denotes a transferable force standard chosen by convention; the force unit is kept generic because force is introduced as a primitive.

\bigskip
\noindent\textbf{Realization (Fig.~\ref{fig:fig3}).} A pulley-coupled force-balance comprises a taut rope passing over a fixed pulley that transmits opposing forces. When forces $F_1$  and $F_2$  act at the rope's opposing ends, the device realizes the comparison relation $Q$ by direct opposition with equilibrium as reference. Codirected forces, when combined, realize the concatenation operation $\ast$. The force standard $e_F$ is realized by a reference spring.

\begin{figure}[h]
	\centering
	\begin{subfigure}[b]{0.3\textwidth}
		\centering
		\includegraphics[width=0.85\textwidth]{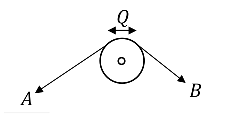}
		\captionsetup{justification=centering}
		\caption*{\Small $A \mathrel{Q} B \iff F_A \le F_B$\\ \vspace{5pt} \small  (a) Force comparison $Q$}
	\end{subfigure}
	\hfil
	\begin{subfigure}[b]{0.3\textwidth}
		\centering
		\includegraphics[width=0.85\textwidth]{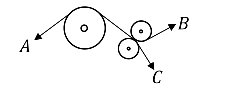}
		\captionsetup{justification=centering}
		\caption*{\Small $B \ast C$\\ \vspace{5pt} \small  (b) Force concatenation $\ast$}
	\end{subfigure}
	\vspace{-.3em}
	\caption{\small \textbf{Canonical Operations of a Pulley-Coupled Force-Balance.} (a) Comparison $Q$: two forces are brought into direct opposition; the ''no stronger than'' relation is assessed relative to equilibrium. (b) Concatenation $\ast $: two forces $B$ and $C$ are co-directed to form a joint force $B\ast C$. These configurations operationally realize Suppes' measurement protocol.}
	\label{fig:fig3}
\end{figure}

\subsection{Admissibility of Modified Suppes Axioms} \label{subsec:admissibilitysuppesforce}
Securing the mathematical structure of Suppes' formalism requires the admissibility of the modified axiom set. The proof parallels the template established for gravitational mass. The Archimedean property (M.8) excludes forces of zero or infinite magnitude-exclusions already built into the Newtonian framework. Closure (M.1) holds because the concatenation of two such forces yields a force of the same kind. The structural axioms (M.2--M.7), by contrast, follow directly from the PEPM.

\begin{proposition}[Admissibility of Canonical Force-Balance Operations] \label{prop:admissibilitysuppesforce}
	Under the PEPM, the canonical force-balance operations, on the domain of forces constrained by closure (M.1) and the Archimedean property (M.8), satisfy the modified Suppes axioms.
\end{proposition}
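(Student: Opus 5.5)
The plan mirrors the proof of Proposition~\ref{prop:admissibilitysuppesmass}, with the balance scale replaced by the pulley-coupled force-balance. The domain-bounding axioms M.1 and M.8 hold by stipulation of the admissible domain of forces, so only the structural axioms M.2--M.7 need to be derived from the PEPM. The one genuinely new ingredient is a \emph{coupling lemma}. Any strict inequality $\lnot(x \mathrel{Q} y)$ between two forces opposed across a fixed pulley makes the rope move in the direction of the stronger force. That motion can be harnessed by an idealized massless, frictionless pulley train to raise a weight. Conversely, an equilibrium $x \sim y$ is a reversible configuration with no net lift. I would state this once and then reuse it in every step.

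With the lemma in hand, the key steps run in the same order as for mass.
\begin{itemize}
\item \textbf{Transitivity (M.2).} Suppose $x \mathrel{Q} y$ and $y \mathrel{Q} z$ but $\lnot(x \mathrel{Q} z)$. Run the closed three-step cycle of oppositions $x$ vs $y$, $y$ vs $z$, $z$ vs $x$, each realized on the force-balance. The first two steps proceed reversibly or spontaneously. The last proceeds strictly spontaneously and, by the coupling lemma, yields net lift per cycle, contradicting the PEPM.
\item \textbf{Compatibility (M.4).} Concatenate $z$ in parallel with both $x$ and $y$ and run the same cycle; a strict violation again produces a lifting cycle.
\item \textbf{Right-cancellability (M.6).} If $x \ast z \mathrel{Q} y \ast z$ but $\lnot(x \mathrel{Q} y)$, the common parallel component $z$ can be decoupled and recoupled at no cost. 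The residual strict imbalance between $x$ and $y$ then drives a spontaneous cycle.
\item \textbf{Connectedness (M.5).} Any two forces placed in direct opposition either balance or one prevails. Failure of both $x \mathrel{Q} y$ and $y \mathrel{Q} x$ would mean each prevails over the other. The rope would then be driven in both senses, and alternating the configuration would yield net lift.
\item \textbf{Associativity (M.3).} This needs only a two-step argument. If $(x\ast y)\ast z$ and $x\ast(y\ast z)$ were strictly ordered, merely regrouping the same parallel forces at the rope end, which costs nothing, would produce a spontaneous cycle with net lift.
\item \textbf{Positivity (M.7).} If $x \ast y \mathrel{Q} x$, adding the parallel force $y$ would not increase the pull. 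Then $y$ could be attached and detached to move a load against $y$'s line of action for free, again violating the PEPM.
\end{itemize}

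The main obstacle is the coupling lemma itself. For masses, the ``weight'' in the PEPM is literally one of the compared objects. For forces, by contrast, I must argue that a strictly unbalanced pair of forces on the pulley can be converted into lifting a weight, without tacitly assuming $F=ma$ or any force law. I would handle this purely kinematically and statically. The stronger side displaces the rope a finite distance, and an idealized lever or pulley train maps that displacement onto raising a test weight. The configuration is then restored using only the equilibrium (reversible) steps. This keeps the argument within the Idealization stated in the introduction.

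A secondary subtlety is that forces are not detachable objects the way masses are. I would therefore stipulate that each force in the domain is reproducible, realized for instance by a spring at fixed extension, so that the cycles can be closed. That reproducibility is the same assumption already implicit in using a reference spring for $e_F$.
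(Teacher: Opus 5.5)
Your proposal is correct and follows the paper's route: M.1 and M.8 hold by stipulation of the domain, and M.2--M.7 are obtained by transplanting the PEPM cycle arguments of Proposition~\ref{prop:admissibilitysuppesmass}. Your coupling lemma is exactly the paper's one new ingredient, namely that the displacement those cycle arguments need is supplied by the driving capacity of a strictly unbalanced (non-zero) resultant; your reproducibility stipulation only makes explicit what the paper leaves implicit when it forms copies $nx$ of a force.
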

\begin{proof}
	M.1 and M.8 are satisfied by the stipulated domain. The proof for the remaining axioms M.2--M.7 carries over unchanged from Proposition~\ref{prop:admissibilitysuppesmass}. The displacement required by the PEPM argument is guaranteed here by the driving capacity of non-zero resultants.
\end{proof}

\bigskip
\noindent Accordingly, the conditions of Suppes' Representation Theorem are satisfied and, with the force unit fixed by a transferable standard, Suppes' protocol for the force-balance yields the unique additive, order-embedding numerical representation of force magnitude.

\subsection{The Privileged Role of Weight}
\label{subsec:priviledgedroleweight}
Among physical forces, weight occupies a uniquely privileged role: it canonically represents force-value equivalence classes, while its measurement is operationally identical to that of gravitational mass, differing only in the choice of transferable standard---a mass standard versus a force standard.

\begin{proposition}[Structural Relation between Gravitational Mass and Weight]  \label{prop:relmassweight}
	Under the PEPM, on the domain of objects satisfying closure (M.1) and the Archimedean property (M.8), the numerical representation of weight (as induced by the force-balance) is related to that of gravitational mass (as induced by the balance-scale) by a similarity transformation
	\begin{equation} \label{eq:relmassweight}
		W_L(B) = \gamma_L m_L^G(B)
	\end{equation}
	where $\gamma_L$ is a constant independent of the object $B$.
\end{proposition}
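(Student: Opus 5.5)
The plan is to use the uniqueness clause of Suppes' Representation Theorem: two additive, order-embedding representations of the same structure differ only by a multiplicative constant. The task is therefore to exhibit weight (via the force-balance) and gravitational mass (via the balance-scale) as two numerical representations of one and the same comparison--concatenation structure on the domain of objects at the fixed position $L$.

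\textbf{Step 1: weight induces a structure on objects.} Fix $L$ and let $K_L$ be the domain of objects constrained by M.1 and M.8. To each object $B \in K_L$ assign its weight force $w(B)$, the force its suspension exerts on a rope end at $L$. Define the relations induced on objects:
\begin{itemize}
\item $B \mathrel{Q_W} C$ iff $w(B) \mathrel{Q} w(C)$ in the force-balance sense;
\item $w(B \ast C)$ is obtained by hanging $B$ and $C$ jointly, that is, by concatenating their weights as parallel forces.
\end{itemize}
The key observation is that, for weights, the pulley-coupled force-balance and the balance-scale are the same device at $L$. An equal-arm balance is a lever transmitting two opposing weights, exactly as a rope over a fixed pulley does. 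Placing objects jointly on a pan is codirected concatenation of their weights. Hence $Q_W$ coincides with the balance-scale comparison $Q$ of Definition~\ref{def:gravmass}, and $w(B \ast C) \sim w(B) \ast w(C)$.

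\textbf{Step 2: justify the identification from the PEPM.} If the two comparisons disagreed on some pair $B, C$, one could hang $B$ and $C$ on the pulley and on the lever. Coupling the device that proceeds spontaneously with the one that can be reversed would close a cycle with net lift, which the PEPM excludes. This uses the cycle argument of Proposition~\ref{prop:admissibilitysuppesmass}.

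Similarly, suppose the weight of $B \ast C$ differed in strength from the concatenation $w(B) \ast w(C)$. Then placing the objects jointly versus side by side on parallel ropes would yield a lifting cycle. This secures additivity of $W_L$ on objects.

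I expect this step to be the main obstacle. The PEPM cycle must be exhibited concretely enough that the agreement of the two operational realizations is forced, and not merely assumed.

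\textbf{Step 3: apply uniqueness.} By Proposition~\ref{prop:admissibilitysuppesforce}, $F$ is an additive, order-embedding representation of force magnitudes. So $B \mapsto W_L(B) := F(w(B))$ is an additive, order-embedding representation of $\langle K_L, Q, \ast\rangle$.

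By Proposition~\ref{prop:admissibilitysuppesmass}, $m_L^G$ is another such representation of the same structure. Uniqueness up to scale in Suppes' Representation Theorem then gives $W_L = \gamma_L m_L^G$ for some constant $\gamma_L>0$. Evaluating at the unit fixes the constant: $\gamma_L = W_L(e_M)$, the force-unit value of the weight of the $1$-kg standard at $L$. This is manifestly independent of $B$, but it may depend on $L$, as the statement permits.
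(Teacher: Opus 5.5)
Your proposal is correct and follows the paper's proof. Both arguments observe that at fixed $L$ the balance scale is itself a force balance for weights, so weight and gravitational mass are additive, order-embedding representations of the same comparison--concatenation structure, and the uniqueness clause of Suppes' Representation Theorem gives $W_L = \gamma_L m_L^G$. Your Step 2 (the PEPM cycle argument forcing the two devices to agree) and the evaluation $\gamma_L = W_L(e_M)$ are elaborations the paper leaves implicit; it simply asserts that the two devices coincide operationally.
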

\begin{proof}
	A balance scale is also a force balance for weights. At a fixed position L, both devices coincide in their canonical operations: comparing two objects with respect to weight yields the same predicate as comparing them with respect to gravitational mass, just as concatenating objects by grouping them realizes the same operation in both devices. At a fixed location L, both devices therefore define operationally isomorphic measurement structures. The uniqueness clause of the Representation Theorem therefore implies that the numerical representations of weight and gravitational mass are related by a similarity transformation. Since the operational definition of gravitational mass is local, $\gamma_L$ remains position-indexed.
\end{proof}

\bigskip
\noindent\textbf{Remark.} While Eq.~(\ref{eq:relmassweight}) is formally identical to $W=gm$, it does not presuppose Newton's Second Law. The result follows strictly from the PEPM and Suppes' grounding of operational primitives. 	

\section{Operational Definition of Inertial Mass} \label{sec:opdefinertialmass}
With force operationally defined via static equilibrium, the decisive question is whether it admits a dynamic invariant: whether $F/a$---the ratio of applied force to the resulting acceleration of a free body---depends on that body alone. The fusion of two classically separate lines of analysis---Stevin's statics and Galileo's kinematics---supplies the deductive closure. Separately, they govern statics and kinematics; together, they close the structural gap between force and acceleration, establishing $F/a$ as a strictly object-intrinsic quantity: the operational definition of inertial mass.

\subsection{Parallel Force on Inclined Planes (Stevin)} \label{subsec:stevin}
Stevin \cite{stevin1586} derived the sine law for parallel forces from his Necklace Argument, a thought experiment based on a geometric manifestation of the PEPM. He considers a chain of identical beads draped over two planes at different inclinations, joined at the top (Fig. \ref{fig:fig4}). Under the PEPM, the chain cannot spontaneously begin to move in either direction; hence, the component of the weight force parallel to the inclined plane is
\begin{equation}
	F^{||}=W_L(B)sin(\theta)
\end{equation}
where $\theta$ denotes the incline angle. 

\begin{figure}[h]
	\centering
	\begin{subfigure}[t]{0.2\textwidth}
		\centering
		\hspace{0.6em}\raisebox{2.7em}{\includegraphics[width=0.7\textwidth]{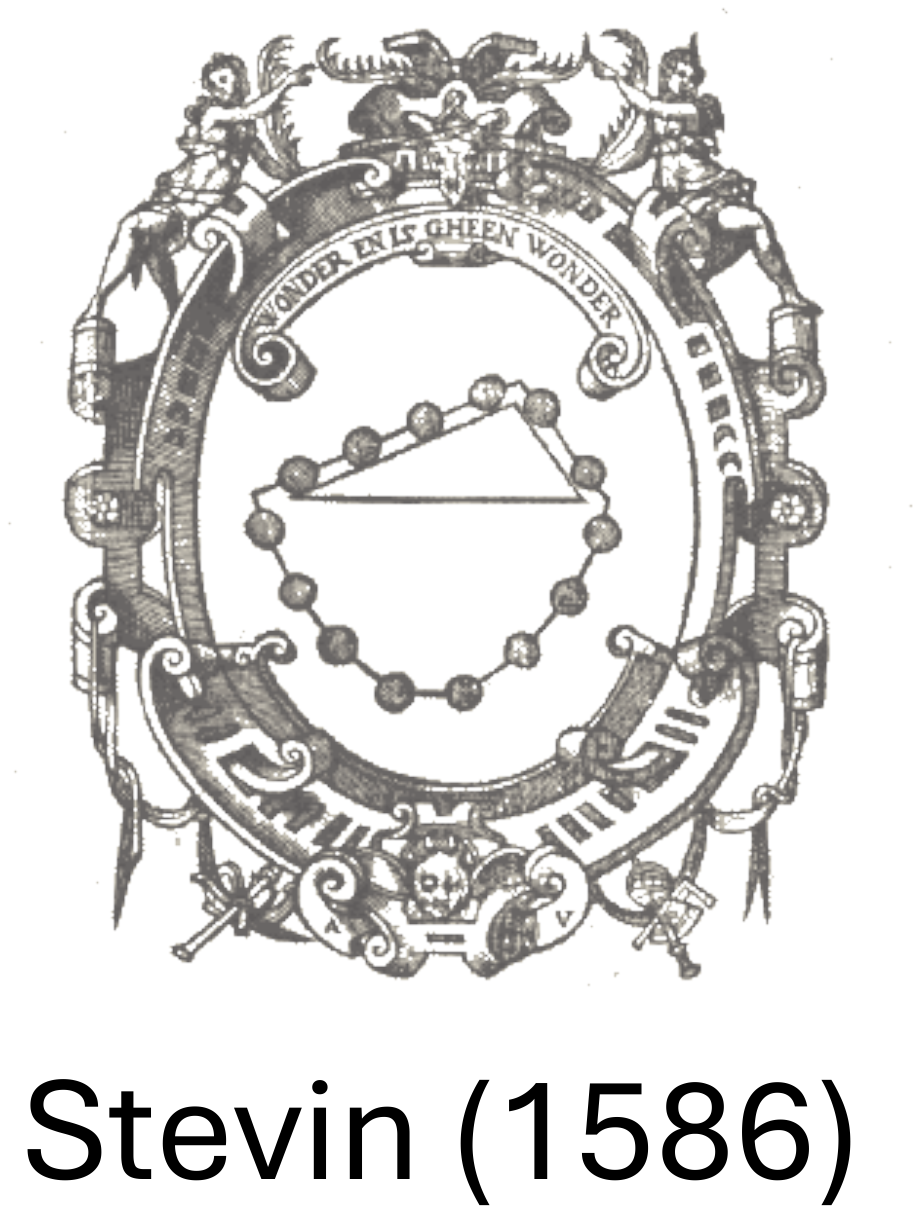}}
	\end{subfigure}
	\hfil
	\begin{subfigure}[t]{0.65\textwidth}
		\centering
		\hspace{-1.8em}\includegraphics[width=1\textwidth]{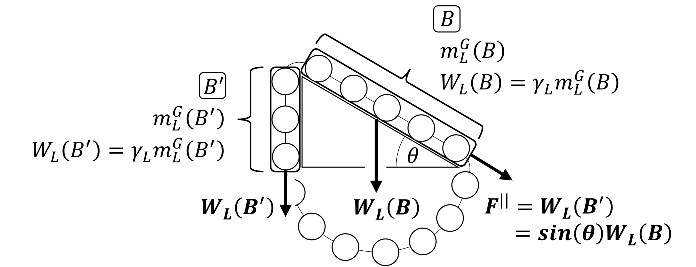}
	\end{subfigure}
	\vspace{0.3em}
	\caption{\small \textbf{Stevin's Necklace Argument.} Left: original depiction from Stevin (1586). Right: modern representation of the same geometry, from which $F^{||} = W_L sin(\theta)$ follows under the PEPM.}
	\label{fig:fig4}
\end{figure}

\bigskip
\noindent Although the result appears self-evident by appeal to vector composition of forces, such reasoning reverses the deductive order: the vector decomposition of force presupposes the sine law rather than grounding it.

\begin{proposition}[Stevin's Sine Law for Parallel Force] \label{prop:sinlawstevin}
	 Under the PEPM, $F_L^{||}(B) = W_L(B) sin(\theta)$, where $F_L^{||}(B)$ denotes the component of the weight force acting on body $B$ parallel to the incline at angle $\theta$.
\end{proposition}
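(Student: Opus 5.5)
The plan is to reproduce Stevin's Necklace Argument within the operational framework already built. By Proposition~\ref{prop:admissibilitysuppesforce}, force magnitudes along a common line of action carry an additive, order-embedding representation, and by Proposition~\ref{prop:relmassweight}, weights at the fixed position $L$ are proportional to gravitational masses. The proof therefore reduces to showing two things: that $F_L^{||}(B)$ is additive in $B$, and that it scales with the geometry of the incline exactly as $\sin\theta$.

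The first step sets up the geometry. Take a triangular prism whose apex is at height $h$, with faces of lengths $\ell_1 = h/\sin\theta_1$ and $\ell_2 = h/\sin\theta_2$. Drape over it a closed chain of $N$ identical beads, each of weight $w = W_L(b)$, spaced uniformly. On each face the beads are point masses at position $L$, up to the locality idealization; I will take the prism small enough that $\gamma_L$ is constant over it.

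The second step applies the PEPM twice. First, the full chain cannot begin to move spontaneously. If it did, a rotation by one bead spacing would return an identical configuration, giving a cyclic process that could be coupled to raise a weight. Second, the hanging lower portion is symmetric, so it exerts no net tangential pull. Removing it cannot disturb equilibrium; otherwise a reversible insertion and removal cycle would again violate the PEPM. What remains is the equilibrium of the segment on face 1 against the segment on face 2. In the language of the pulley-coupled force-balance, the parallel force of $n_1$ beads on face 1 is $\sim$-equivalent to that of $n_2$ beads on face 2, with $n_i \propto \ell_i$.

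The third step uses additivity. Concatenating beads on one face is concatenation of codirected forces, so the parallel force on a face is $n_i f(\theta_i)$, where $f(\theta)$ is the parallel force per bead. Equilibrium then gives $\ell_1 f(\theta_1) = \ell_2 f(\theta_2)$, that is, $f(\theta)/\sin\theta$ is independent of $\theta$. Fixing the constant requires a calibration: at $\theta = \pi/2$ the bead hangs vertically, so the parallel force is its full weight and $f(\pi/2) = w$. Hence $f(\theta) = w \sin\theta$.

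The last step extends the result from identical beads to an arbitrary body $B$. The parallel-force assignment inherits additivity and order-embedding from the force representation, and equal-weight objects on the same incline are interchangeable: placing each on the same pulley-coupled comparison gives the same result, which a PEPM exchange cycle confirms. It follows that $F_L^{||}(B)/W_L(B)$ agrees with the bead ratio on rational multiples of $w$. Approximating $W_L(B)$ by multiples of a small bead weight, in the manner of Suppes' protocol $m_x(n)/n$, and using order-embedding, then gives $F_L^{||}(B) = W_L(B)\sin\theta$.

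I expect the second step to be the main obstacle. The argument that discarding the hanging segment preserves equilibrium must be justified by a PEPM cycle rather than by tacitly decomposing forces into vector components, since the paper explicitly forbids the latter. It also has to be made rigorous for discrete beads, where $n_i \propto \ell_i$ holds only in the limit of fine spacing. I would handle this by bounding $n_i$ between $\lfloor \ell_i/s \rfloor$ and $\lceil \ell_i/s \rceil$ for bead spacing $s$ and passing $s \to 0$ through the Archimedean property (M.8).
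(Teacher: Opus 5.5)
Your proof is correct at the paper's level of rigor. Its core is the paper's: Stevin's necklace under the PEPM, with the apex acting as the fixed pulley of the force balance, calibrated against a vertical face. The paper takes one face vertical from the outset, so the inclined segment $B$ is balanced directly by the weight of the vertical segment $B'$, with $m_L^G(B') = m_L^G(B)\sin\theta$ by additivity and geometry. Your functional equation for $f(\theta)/\sin\theta$ followed by calibration at $\pi/2$ has the same content.

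The genuine difference is the passage to an arbitrary body. The paper fixes $B$ and squeezes in the angle. It brackets $nF_L^{||}(B)$ between $W_L(k_n e_M)$ and $W_L((k_n+1)e_M)$, picks inclinations $\theta^\pm$ at which $nB$ balances these multiples of the standard, and lets $\sin\theta^+ - \sin\theta^-$ shrink like $1/n$. That route needs continuity and monotonicity of the parallel force in $\theta$, existence of the balancing inclinations, and the sine law read off those configurations.

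You instead fix $\theta$ and squeeze in weight. Take $m$ to be the least integer with $nB \mathrel{Q} mb$. Additivity and monotonicity give $(m-1)w\sin\theta \le nF_L^{||}(B) \le mw\sin\theta$, alongside $(m-1)w < nW_L(B) \le mw$. Hence $|F_L^{||}(B) - W_L(B)\sin\theta| \le w\sin\theta/n$, with no need for small beads and no regularity in $\theta$. This is arguably the cleaner transfer: the exact law is proved once for beads and then carried to arbitrary bodies by the measurement structure alone.

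Its price is a lemma you only gesture at. Your equal-weight exchange cycle gives well-definedness on weight classes, but the bracketing needs monotonicity: $W_L(B_1) \le W_L(B_2) \Rightarrow F_L^{||}(B_1) \le F_L^{||}(B_2)$ at fixed $\theta$. This is not inherited from the force representation. It does follow from a PEPM cycle on a symmetric double incline, where the two bodies joined over the apex move through equal heights.

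Finally, in your second step it is the mirror symmetry of the festoon (equal tensions at the two lower corners) that licenses removing it. The insertion--removal cycle adds nothing, and the paper simply takes this step for granted.
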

\begin{proof}
	For $\theta = \pi /2$, the result is immediate since the parallel force reduces to weight. For $\theta < \pi /2$, the result follows from the geometry of Stevin's necklace (Fig.~\ref{fig:fig4}): equilibrium requires the parallel weight component of the inclined segment $B$ to equal the weight of the corresponding vertical segment $B'$. Hence $F_L^{||}(B)=W_L(B') = \gamma_L m_L^G(B') = \gamma_L m_L^G(B) sin(\theta)=W_L(B) sin(\theta)$, where the chain of equalities follows from the operational definition of force, Eq.~(\ref{eq:relmassweight}), additivity and geometry, and again from Eq.~(\ref{eq:relmassweight}). Thus, the sine law holds whenever the Stevin configuration is materially realizable---that is, whenever such a balancing body $B'$ exists for the chosen $(B,\theta )$. For arbitrary $( B,\theta )$, such a realization need not exist because of the mismatch between continuous geometry and discrete matter. That mismatch is resolved in the limit by a squeeze argument. Consider n identical copies of the entire Stevin configuration. By the Archimedean property (M.8), there exists $k_n \in \mathbb{N}$ with $W_L ( k_n e_M ) \le n F_L^{||}(B) \le W_L ( (k_n+1) e_M )$. Let $\theta^-$ and $\theta^+$, with $\theta^- \le \theta \le \theta^+$, be inclinations at which $nB$ balances $k_n e_M$ and $(k_n+1) e_M$, respectively, on the vertical. For sufficiently large $n$, their existence is guaranteed by the continuity and monotonicity of the parallel force: the lower inclination $0 \le \theta^- \le \theta$ exists because parallel force vanishes at zero incline; the upper inclination $\theta \le \theta^+ \le \pi /2$ exists because the available force gap $W_L(nB)-F_L^{||}(nB)$ grows unboundedly with $n$, whereas the required force gap $W_L((k_n+1) e_M ) - F_L^{||}(nB)$ is bounded by $W_L(e_M )$ via the Archimedean bracketing. Since these configurations are materially realizable, $W_L(nB)sin(\theta^- ) = W_L(k_n e_M)$ and $W_L(nB)sin(\theta^+ ) = W_L((k_n+1) e_M )$. Hence $sin(\theta^+ ) - sin(\theta^- ) = \frac{1}{n}\bigl(W_L(e_M )/W_L(B) \bigr)$. The bracketing collapses as $n \to \infty$, so $\theta^{\pm} \to \theta$ and thus $F_L^{||}(B) = W_L(B)sin(\theta)$.
\end{proof}

\subsection{Acceleration on Inclined Planes (Galileo)} \label{subsec:galileo}
Galileo's sine law for inclined-plane acceleration \cite{galileo1638} asserts proportionality between acceleration and the sine of the incline angle (Fig.~\ref{fig:fig5}):
\begin{equation}
	a_L(B) = g_L(B) sin(\theta)
\end{equation}
where $a_L(B)$ and $g_L(B)$ denote the inclined-plane and the free-fall accelerations of the body $B$ at position $L$. As in Stevin's case, the result can appear geometrically self-evident, but requires subtle kinematic reasoning. That reasoning is based on Galileo's Principle of Equal Heights (PEH), itself a direct consequence of the PEPM: a body descending from a given height acquires exactly the speed required to reascend to that height, independent of path. 

\bigskip
\begin{figure}[h]
	\centering
	\includegraphics[width=0.35\textwidth]{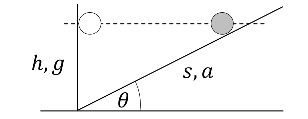}
	\vspace{5pt}
	\caption{\small \textbf{Galileo's Principle of Equal Heights.} A body descending from a given vertical height acquires exactly the speed required to reascend to that height, independent of path.}
	\label{fig:fig5}
\end{figure}

\begin{proposition}[Galileo's Sine Law for Inclined-Plane Acceleration] \label{prop:sinlawgalileo}
	Under the PEPM, $a_L(B) = g_L(B) sin(\theta)$.
\end{proposition}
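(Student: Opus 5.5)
The plan is to derive the sine law from Galileo's Principle of Equal Heights (PEH), which the paper obtains from the PEPM, together with a geometric comparison between the incline and the vertical. We use no force law, so nothing here presupposes $F=ma$.

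\textbf{Step 1: PEH.} First fix the PEH as a consequence of the PEPM. Suppose a body descending a height $h$ along some path reached a speed that let it reascend above $h$ along another path. Coupling the excess to a lever would raise a weight cyclically, contradicting the PEPM. Running the same cycle in reverse excludes a deficit. Hence the speed at the bottom depends only on the vertical drop $h$, and not on the path. Write this speed as $v_L(B,h)$.

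\textbf{Step 2: two premises about the motion.} We need that, at fixed $(L,B,\theta)$, motion along the incline is uniformly accelerated from rest. The justification is homogeneity: every segment of the plane is congruent to every other, and the PEPM admits no position-dependent driving along the plane. This gives $v^2 = 2 a_L(B)\, s$ after a path length $s$. Applying the same reasoning to the vertical gives $v^2 = 2 g_L(B)\, h$.

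\textbf{Step 3: combine.} Take the incline at angle $\theta$ with length $s = h/\sin(\theta)$, so that the vertical drop is $h$. By Step 1 the terminal speeds on the incline and on the vertical coincide. Therefore
\[
2a_L(B)\,h/\sin(\theta) = 2 g_L(B)\, h,
\]
which gives $a_L(B)=g_L(B)\sin(\theta)$. The endpoint cases are handled separately. At $\theta=\pi/2$ the result is immediate. At $\theta=0$ there is no drop, so the PEPM forbids spontaneous motion from rest and $a=0$.

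\textbf{Main obstacle.} The hardest step is justifying uniform acceleration on the incline, that is, the relation $v^2\propto s$, without Newtonian dynamics. I would try to get it directly from the PEH applied to sub-segments. A body starting from rest at intermediate points reaches speeds that depend only on the drop, which is proportional to $s$. Additivity of the drop along the path then forces $v^2$ to be additive in $s$, hence linear, provided $v$ is monotone in $s$ (a Cauchy-type argument).

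A second subtlety is that the speed at a point must not depend on the prior history. Here I would invoke the instantaneous rest-frame convention.

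Finally, $v^2=2as$ must be identified with constant acceleration. For this I would use Galileo's mean-speed theorem applied to $v\propto\sqrt{s}$. Differentiating gives $dv/dt = (dv/ds)\,v$, which is constant.
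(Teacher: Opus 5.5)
Your Steps 1 and 3 match the paper's argument. The PEH equates the speed after descending an incline of length $s$ with the speed after a vertical drop $h=s\sin(\theta)$, and $2sa_L(B)=v^2=2hg_L(B)$ then gives $a_L(B)/g_L(B)=h/s=\sin(\theta)$.

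The gap is in how you secure $v^2=2as$ over a finite stroke. Homogeneity of the plane removes only position dependence of the acceleration, not dependence on speed. Your Cauchy-type fallback fails at the claim that additivity of the drop makes $v^2$ additive in $s$. The PEH gives only $v=V(h)$ for one path-independent function $V$. It says nothing about the speed gained on a second sub-segment entered with nonzero speed, so nothing forces $V(h_1+h_2)^2=V(h_1)^2+V(h_2)^2$.

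Here is a concrete counterexample. Let $z$ be the height and $\Phi$ any increasing function with $\Phi(0)=0$; conservation of $z+\Phi(v)$ gives path independence and equal heights. Choose $\Phi$ so that $V(h)^2=2gh+\beta h^2$. On the incline this gives $v^2=2gs\sin(\theta)+\beta s^2\sin^2(\theta)$, so the acceleration is $g\sin(\theta)+\beta s\sin^2(\theta)$, which is not uniform. Global uniform acceleration therefore cannot be extracted from the PEH. It would need an extra premise, such as velocity-independence of the acceleration.

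The missing idea is in the paper's closing sentence: localize instead of globalizing. Since $\dot v=\tfrac{1}{2}\,d(v^2)/ds$, the instantaneous accelerations from rest are $a_L(B)=\lim_{s\to0}v^2/(2s)$ on the incline and $g_L(B)=\lim_{h\to0}v^2/(2h)$ on the vertical, whatever happens at finite stroke. The PEH gives $v=V(s\sin(\theta))$ on the incline and $v=V(h)$ on the vertical. Hence $a_L(B)=\sin(\theta)\lim_{h\to0}V(h)^2/(2h)=g_L(B)\sin(\theta)$.

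The example above behaves exactly this way: the acceleration at $s=0$ is $g\sin(\theta)$, even though $v^2=2as$ fails for $s>0$. The instantaneous reading is also the one the paper needs, since all its quantities are defined in instantaneous rest frames. With this change your proof is complete, and you can drop the Cauchy and mean-speed steps.
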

\begin{proof}
	From the PEH, together with symmetry of motion and inclined-plane geometry, a body descending one incline and ascending another to the same height must satisfy, by uniform-acceleration kinematics from rest, $2sa_L(B) = v^2 = 2hg_L(B)$, where $v$ is the velocity at the low point; thus $a_L(B)/g_L(B) = h/s = sin(\theta)$. In the limit $s \to 0$, $a_L(B)$ is the instantaneous acceleration from rest.
\end{proof}

\subsection{Inertial Mass as an Object-Intrinsic Quantity} \label{subsec:inertmassobjintrinsic}
Fusing Stevin's static sine law with Galileo's kinematic sine law yields the central invariant: $F/a$ depends solely on $B$.

\begin{proposition}[Structural Foundation for Inertial Mass] \label{prop:faratio}
	Assume the PEPM. Let $B$ be any body and $F$ any nonzero finite force-value. Let $a(B,F)$ denote the scalar acceleration of $B$ under $F$. Then for any position $L$ with nonzero gravitational field,
	\begin{equation} \label{eq:faratio}
		\frac{F}{a(B,F)} =\frac {\gamma_L m_L^G(B)}{g_L(B)}
	\end{equation}
	Hence the ratio $F/a(B,F)$ is a well-defined intrinsic quantity of $B$. 
\end{proposition}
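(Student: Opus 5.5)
The plan is to realize an arbitrary force-value $F$ as the parallel weight component acting on $B$ itself on a suitably inclined plane. The kinematic response to that force is then read off from Galileo's law, and the static magnitude from Stevin's law. Dividing the two, the incline angle cancels.

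\emph{Realizing $F$ on an incline.} Fix a position $L$ with nonzero field, so that $W_L(B)=\gamma_L m_L^G(B)>0$ by Proposition~\ref{prop:relmassweight}. First suppose $0<F\le W_L(B)$. Choose $\theta\in(0,\pi/2]$ with $\sin\theta = F/W_L(B)$.

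By Proposition~\ref{prop:sinlawstevin}, the resultant acting on $B$ along the incline (weight plus the normal constraint force, which is exactly the ``single-force realization'' of Section~\ref{sec:opdef_froce}) has magnitude $F_L^{||}(B)=W_L(B)\sin\theta=F$. Since the operational definition of force targets only the resultant's magnitude, via the force-balance, and is blind to its origin, $a(B,F)$ equals the acceleration of $B$ on this incline.

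By Proposition~\ref{prop:sinlawgalileo}, that acceleration is $a_L(B)=g_L(B)\sin\theta$. Dividing gives
\[
\frac{F}{a(B,F)}=\frac{W_L(B)\sin\theta}{g_L(B)\sin\theta}=\frac{\gamma_L m_L^G(B)}{g_L(B)},
\]
which is Eq.~(\ref{eq:faratio}). The right-hand side does not involve $F$, so the ratio is independent of the applied force.

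\emph{Extending to $F>W_L(B)$.} Here I would use additivity: decompose $F$ into $n$ equal parts $F/n\le W_L(B)$, or replace $B$ by $n$ coupled copies. Alternatively I would simply invoke the position freedom and choose $L$ with large enough field. My first attempt would be the decomposition: let $n$ identical copies of $B$ act in parallel, each on its own incline, giving the concatenation $nB$ and the force $F$ jointly. This relies on Archimedean bracketing (M.8) and on additivity of both $m^G$ and force.

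\emph{Intrinsicness across positions.} The statement claims the ratio is intrinsic to $B$, but the right-hand side is position-indexed. I would argue as follows. The left-hand side $F/a(B,F)$ is defined without reference to $L$, since force and acceleration are measured in the instantaneous rest frame. Hence the right-hand side must take the same value at every $L$ where the construction is available, and $F/a$ depends on $B$ alone.

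\emph{Main obstacle.} The hardest step is justifying that the acceleration produced by an arbitrary resultant of magnitude $F$ coincides with that produced by the gravitational incline realization of the same magnitude. In other words, the dynamical effect must depend only on the force-value, not on its mechanism. I would ground this in the section's stipulation that the resultant is the sole determinant of dynamical response, together with the force-balance equivalence classes. If two resultants balance the same standard, they are equal as force-values. A PEPM-style exchange argument should then exclude different accelerations: a body driven faster by one realization than by an equal-valued other could be coupled to lift a weight in a cycle.
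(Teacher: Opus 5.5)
Your treatment of $0<F\le W_L(B)$ is the paper's steps (i) and (ii) in the special case $n=1$. This includes the representative-independence argument you identify as the main obstacle. The genuine gap is the case $F>W_L(B)$. It cannot be waved through: the proposition quantifies over every nonzero finite $F$, and force-independence of $F/a(B,F)$ is its whole content.

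Your decomposition does not reach $a(B,F)$. Putting $n$ copies of $B$ on their own inclines, each carrying $F/n$, gives you $a(B,F/n)$ if the copies are uncoupled. If they are coupled, it gives $F/a(nB,F)=\gamma_L m_L^G(nB)/g_L(nB)$. Neither is the response of the single body $B$ to the full force $F$.

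Passing from $a(B,F/n)$ to $a(B,F)$ would require $a(B,\cdot)$ to be linear in the force. That is precisely the ``doubling the force doubles the acceleration'' inference the proposition is meant to establish, so this route is circular. Supplying the parts $F/n$ to $B$ itself through other bodies' weights (ropes, pulleys) also fails: those bodies' own accelerations come into play, so their static weight values are not what is delivered during motion. The ``position freedom'' alternative needs a location with $W_{L'}(B)\ge F$ for arbitrary $F$, i.e.\ fields of unbounded strength, and nothing in the hypotheses supplies that.

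What is missing is an independent kinematic scaling law between $a(B,F)$ and $a(nB,F)$. The paper gets it in step (iii) from a second PEPM cycle:
\begin{itemize}
\item A spring acting over an infinitesimal stroke $\mathrm{d}s$ on a horizontal track represents $F$. The track then turns into a vertical rise, and uniform-acceleration kinematics gives $g_L(B)h=a(B,F)\,\mathrm{d}s+o(\mathrm{d}s)$.
\item Let $B$ and $nB$ reach heights $h_1$ and $h_n$ under the same stroke. In the cycle, $nB$ descends and recompresses the spring, $n-1$ copies are removed, one copy rises to $h_1$, and all copies are reversibly redistributed to height $h_1/n$.
\item The PEPM then forces $h_1=nh_n$. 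Hence $a(B,F)/g_L(B)=n\,a(nB,F)/g_L(nB)$ as $\mathrm{d}s\to 0$.
\end{itemize}
Combining this with the $nB$ incline result and additivity of $m_L^G$ yields the claimed formula for $B$ itself. Without such a step your argument proves the identity only for $F\le W_L(B)$. That leaves force-independence of $F/a(B,F)$, and hence intrinsicness, unestablished.
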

\begin{proof}
	The proof proceeds in three steps. First, the induced acceleration is shown to be independent of the representative of a force-value. This permits realization of the force-value by weight, yielding Eq.~(\ref{eq:faratio}) for a suitable compound body $nB$. A reduction argument then yields Eq.~(\ref{eq:faratio}) for $B$, thereby establishing $F/a(B,F)$ as well-defined and object-intrinsic.
	\vspace{3pt}\\
	(i) Independence of representation: Suppose two representatives of $F$, $A \sim A'$, induce different accelerations on $B$. A closed three-phase cycle---a fixed stroke under A, a return stroke under $A'$, and a reversible reset returning $A$ and $A'$ to their initial configurations---would then produce nonzero velocity from rest, violating the PEPM. Hence, $a(B,F)$, and therefore $F/a(B,F)$, are invariant under the choice of representative of $F$.
	\vspace{3pt}\\
	(ii) Invariance across force-value for $nB$: Fix $L$ and choose $n \in \mathbb{N}$ sufficiently large so that $F \le W_L(nB)$; the additivity of weight guarantees the existence of such $n$. As a representative of the force-value $F$, choose the parallel weight component of $nB$ at an incline $\theta$ such that $F = W_L(nB)sin(\theta)$. By Proposition~\ref{prop:sinlawgalileo}, $a(nB,F) = g_L(nB)sin(\theta)$. Dividing and applying Proposition~\ref{prop:relmassweight} yields $F/a(nB,F) =\gamma_L m_L^G (nB)/g_L(nB)$.
	\vspace{3pt}\\
	(iii) Reduction from $nB$ to $B$: If $n>1$, a closed cycle argument establishes the kinematic scaling between $a(B,F)$ and $a(nB,F)$. Consider a smooth track with a horizontal segment transitioning into a vertical rise. On the horizontal segment, a spring operating over an infinitesimal stroke ds between two stops represents $F$ up to $o(\mathrm{d}s)$ along that stroke. From rest, constant-acceleration kinematics gives $v^2=2a(B,F)  \mathrm{d}s+o(\mathrm{d}s)$. Along the vertical segment, the same relation gives $v^2 = 2g_L(B)h$, hence $g_L(B)h=a(B,F)\mathrm{d}s+o(\mathrm{d}s)$. Let $h_1$ and $h_n$ denote the heights attained by $B$ and $nB$, respectively, under the same spring stroke $\mathrm{d}s$. Consider a closed cycle: $nB$ descends from $h_n$ and compresses the spring; $n-1$ copies of $B$ are removed; one copy ascends to $h_1$; all $n$ copies are then reversibly redistributed to the common height $h_1/n$. The PEPM forbids net lifting in cycles, forcing $h_1  = nh_n$. In the limit $\mathrm{d}s \to 0$, this yields $a(B,F)/g_L(B) = n a(nB,F)/g_L(nB)$, and therefore $F/a(B,F) = \gamma_L m_L^G(B)/g_L(B)$.
\end{proof}

\medskip
\noindent In classical mechanics, the force-invariance of F/a(B,F)  is posited, not derived. Here it arises from the PEPM through operational measurement and inclined-plane geometry. Proposition~\ref{prop:faratio} thereby grounds the operational definition of inertial mass.

\medskip
\begin{definition}[Inertial Mass] \label{def:definertialmass}
Assume the PEPM. For any body $B$ and any nonzero force-value $F$, let $a(B,F)$ denote the scalar acceleration of $B$ induced by $F$. The inertial mass of B is given by
\begin{equation} \label{eq:opdefinertialmass}
	m^I(B) \overset{\mathrm{def}}{=} F/a(B,F)
\end{equation}
\end{definition}

\medskip
\noindent\textbf{Remark.} Although this definition formally resembles Newton's Second Law, it remains a definition rather than a law; to read it as a law would render the relation tautological. 

\bigskip
\noindent Substituting Definition~\ref{def:definertialmass} into Eq.~(\ref{eq:faratio}) yields
\begin{align}
	\makebox[0.2\textwidth][l]{} &
	\makebox[0.2\textwidth][c]{$\displaystyle\frac{m_L^G(B)}{m^I(B)} = \frac{g_L(B)}{\gamma_L}$} \label{eq:csge} & \makebox[0.5\textwidth][r]{\text{Combined Stevin--Galileo Equation (CSGE)}}
\end{align}

\bigskip
\begin{proposition}[Weight Parameter Anchoring] \label{prop:weightanchoring}
	 Assume the PEPM. There exists a position-invariant constant c such that $\gamma_L = g_L(e_M )/c$ for all $L$ with nonzero gravitational field.
\end{proposition}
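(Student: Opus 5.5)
The plan is to apply Proposition~\ref{prop:faratio} (equivalently, the CSGE, Eq.~(\ref{eq:csge})) to the mass standard itself, $B=e_M$. This gives
\begin{equation*}
	\gamma_L = \frac{g_L(e_M)\, m^I(e_M)}{m_L^G(e_M)}
\end{equation*}
for every $L$ with nonzero gravitational field. Two facts then do the work. First, $m^I(e_M)$ is position-invariant, because Definition~\ref{def:definertialmass} makes $m^I$ depend only on the body. Second, $m_L^G(e_M)$ does not depend on $L$, because $e_M$ is the unit of the balance-scale protocol at every position.

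\textbf{Step 1 (the unit is fixed locally).} By the measurement protocol of Section~\ref{sec:opdefmass}, $m_L^G(e_M)=\inf\{m/n \mid n e_M \mathrel{Q} m e_M\}$, with $Q$ the comparison at $L$. By M.5 we have $e_M \mathrel{Q} e_M$, so $1\in S$. Positivity (M.7) together with the cancellation argument from Proposition~\ref{prop:axiomVsufficiency} excludes $n e_M \mathrel{Q} m e_M$ whenever $m<n$. Hence $m_L^G(e_M)=1\,\mathrm{kg}$ at every $L$. This needs only the admissibility result, Proposition~\ref{prop:admissibilitysuppesmass}, which holds at an arbitrary fixed $L$.

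\textbf{Step 2 (intrinsic inertial mass).} By Proposition~\ref{prop:faratio}, $F/a(e_M,F)$ is independent of $F$. The remaining question is whether it is independent of $L$. The acceleration $a(e_M,F)$ is produced by a force-value $F$ that may be realized by a transferable spring standard in the horizontal setup of part (iii). That setup involves no weight, so the ratio is determined by $e_M$ and the force-value alone. I would make this explicit with a transport argument. Carry the spring and $e_M$ to another position $L'$, and suppose the horizontal acceleration differed there. Then a cycle that launches at $L$, transports reversibly, and brakes at $L'$ would yield net kinetic energy, contradicting the PEPM. I therefore set $c:=1\,\mathrm{kg}/m^I(e_M)$, a position-invariant constant.

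\textbf{Step 3 (conclusion).} Substituting the two facts into the displayed identity gives $\gamma_L = g_L(e_M)\,m^I(e_M)/(1\,\mathrm{kg}) = g_L(e_M)/c$, as claimed.

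\textbf{Main obstacle.} I expect Step 2 to be the hardest part. Definition~\ref{def:definertialmass} asserts object-intrinsicness, but Proposition~\ref{prop:faratio} establishes it only at a fixed $L$: its right-hand side carries $L$-indices whose $L$-dependence cancels only if the quotient is constant in $L$. Some argument is needed that the force-value, fixed through the transferable standard $e_F$, means the same thing at every position. I would try the PEPM transport cycle first. If it proves too delicate, I would instead argue that the force-balance comparison of springs involves no gravity, so Suppes' force representation is itself position-invariant, and that part (iii) of Proposition~\ref{prop:faratio} then fixes $a(e_M,F)$ without any reference to $L$.
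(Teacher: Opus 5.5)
Your proposal is correct and is essentially the paper's proof: set $B=e_M$ in the CSGE, note that $m_L^G(e_M)=1$~kg by the operational definition (your Step~1 just spells this out) and that $m^I(e_M)$ is position-invariant, and read off $c=m_L^G(e_M)/m^I(e_M)=g_L(e_M)/\gamma_L$. The paper takes the position-invariance of $m^I(e_M)$ directly from Proposition~\ref{prop:faratio}, whose ratio $F/a(B,F)$ carries no position index, so your Step~2 transport cycle (which as sketched is only heuristic) goes beyond the paper's argument. Your remark that the proof of that proposition is run at fixed $L$ is fair, but within the paper's framework the citation alone suffices.
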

\begin{proof}
	Set $B=e_M$ in the CSGE. The numerator $m_L^G(e_M )=1$-kg is position-invariant by operational definition, while the denominator $m^I(e_M)$ is position-invariant by Proposition~\ref{prop:faratio}. Hence the left-hand side is independent of $L$, and so is $g_L(e_M )/\gamma_L$.
\end{proof}

\bigskip
\noindent Therefore, the fusion of Stevin's and Galileo's sine laws, both direct consequences of the PEPM, bridges statics and kinematics, establishing inertial mass as an intrinsic quantity of the body and anchoring the static weight parameters $\gamma_L$ to the local free-fall acceleration of the $1$-kg mass transfer standard.

\section{Equivalence of Inertial and Gravitational Mass}
The PEPM alone establishes the operational structure defining gravitational mass, force, and inertial mass. The WEP then unifies the two mass concepts.

\begin{proposition}[WEP vs Equivalence of Inertial and Gravitational Mass]\label{prop:wepvsmassequivalence}
	 Assume the PEPM. Then inertial mass is well-defined and the following are equivalent:
	\begin{enumerate}[label=(\roman*), leftmargin=3.5em, itemsep=0pt, topsep=0pt]
		\item The WEP holds.
		\item $m_L^G(B)/m^I(B)$ is independent of both the body $B$ and position $L$.
		\item $m_L^G(B)/m^I(B)$ is independent of the body $B$.	
	\end{enumerate}	
\end{proposition}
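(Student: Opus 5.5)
The plan is to work entirely through the Combined Stevin--Galileo Equation (CSGE), Eq.~(\ref{eq:csge}), together with Proposition~\ref{prop:weightanchoring}. Well-definedness of inertial mass is already supplied by Proposition~\ref{prop:faratio}, so it only needs to be cited. The CSGE then rewrites the ratio in question as
\begin{equation*}
\frac{m_L^G(B)}{m^I(B)}=\frac{g_L(B)}{\gamma_L}=c\,\frac{g_L(B)}{g_L(e_M)},
\end{equation*}
where $c$ is the position-invariant constant of Proposition~\ref{prop:weightanchoring}. This turns every statement about the mass ratio into a statement about free-fall accelerations, which is exactly what the WEP speaks to. I will prove the cycle (i)$\Rightarrow$(ii)$\Rightarrow$(iii)$\Rightarrow$(i).

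\textbf{(i)$\Rightarrow$(ii).} Assume the WEP. At each position $L$, all bodies share the same free-fall acceleration, so $g_L(B)=g_L(e_M)$ for every $B$. The displayed identity then gives $m_L^G(B)/m^I(B)=c$. Since $c$ does not depend on $L$ (Proposition~\ref{prop:weightanchoring}) and the right-hand side no longer involves $B$, the ratio is independent of both $B$ and $L$.

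\textbf{(ii)$\Rightarrow$(iii)} is immediate, since (iii) is weaker than (ii).

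\textbf{(iii)$\Rightarrow$(i).} Fix $L$ and suppose the ratio does not depend on $B$. Evaluating it at $B$ and at $e_M$ gives
\begin{equation*}
c\,g_L(B)/g_L(e_M)=c\,g_L(e_M)/g_L(e_M)=c,
\end{equation*}
hence $g_L(B)=g_L(e_M)$ for every body $B$. So all bodies share the local free-fall acceleration at every $L$ with nonzero field, which is the WEP. The word ``locally'' in the WEP matches the position-indexing of $g_L$.

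The main point of care is whether (iii) should be read as ``independent of $B$ at each fixed $L$'', with the constant allowed to depend on $L$. Under that reading the argument above goes through unchanged, because $c$ cancels pointwise in $L$. One then recovers full position-independence in (ii) for free from Proposition~\ref{prop:weightanchoring}. This also explains why (ii) and (iii) are equivalent rather than (ii) being strictly stronger.

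A second subtlety concerns positions with vanishing field, where $g_L=0$ and the CSGE was derived only for nonzero field. I will state explicitly that all three conditions are quantified over positions $L$ with nonzero gravitational field, as in Proposition~\ref{prop:faratio}. This is harmless, since the WEP is vacuous there.

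Finally, I must check that no step invokes $F=ma$ or mass equivalence implicitly. This holds because the only inputs are the CSGE and the anchoring of $\gamma_L$, both derived from the PEPM alone, and the WEP is used purely as universality of free fall.
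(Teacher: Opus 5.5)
Your proof is correct and takes essentially the same route as the paper. Both rewrite the ratio via the CSGE and use Proposition~\ref{prop:weightanchoring} to get position-independence in (i)$\Rightarrow$(ii); (ii)$\Rightarrow$(iii) is trivial, and (iii)$\Rightarrow$(i) is read off the CSGE at fixed $L$. Substituting $\gamma_L = g_L(e_M)/c$ at the outset and restricting to nonzero-field positions only makes explicit what the paper's three-line proof leaves implicit.
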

\begin{proof}
	Assume condition (i) holds. Then the ratio $g_L(B)/\gamma_L$ is independent of $B$, and for $B=e_M$ is also independent of $L$ by Proposition 6.5. The CSGE then implies (ii). Condition (ii) trivially implies (iii), and under (iii) the CSGE implies the WEP.
\end{proof}

\bigskip
\noindent Although the WEP and the equivalence of inertial and gravitational mass are distinct propositions, under the PEPM each implies the other without presupposing Newton's Second Law. Consequently, under the PEPM alone, tests of mass equivalence are also tests of the WEP; under the combined PEPM--WEP constraint, inertial and gravitational mass are equivalent.

\begin{proposition}[Equivalence of Inertial and Gravitational Mass]\label{prop:massequivalence}
	Under the PEPM and WEP, inertial and gravitational mass are equivalent, coinciding by choice of force standard:
	\vspace{2pt}\\
	(a) For arbitrary force units
	\begin{equation}\label{eq:massequivalence}
		m_L^G(B) = c m^I(B) \text{\quad for all } B, L 	
	\end{equation}
 	where $c$ is independent of $B$ and $L$. Gravitational mass is thereby likewise position-invariant.
	\vspace{2pt}\\
	(b) Upon rescaling the force unit by a factor $1/c$
	\begin{equation}\label{eq:massequality}
		m_L^G(B) = m^I(B) \text{\quad for all } B, L			
	\end{equation}
\end{proposition}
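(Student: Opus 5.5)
The plan is to combine the Combined Stevin--Galileo Equation (CSGE), Eq.~(\ref{eq:csge}), with Proposition~\ref{prop:weightanchoring} and Proposition~\ref{prop:wepvsmassequivalence}.

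For part (a), assume the PEPM and the WEP. By the WEP, the local free-fall acceleration $g_L(B)$ does not depend on $B$, so $g_L(B)=g_L(e_M)$ for every body $B$ at every position $L$ with nonzero gravitational field. Substituting this into the CSGE gives
\[
\frac{m_L^G(B)}{m^I(B)} = \frac{g_L(e_M)}{\gamma_L}.
\]
Proposition~\ref{prop:weightanchoring} then supplies $\gamma_L = g_L(e_M)/c$ with $c$ position-invariant, so the right-hand side equals $c$ for all $B$ and $L$. This is Eq.~(\ref{eq:massequivalence}); it is exactly the content of (i)$\Rightarrow$(ii) in Proposition~\ref{prop:wepvsmassequivalence}, now with the constant identified. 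Position-invariance of $m_L^G(B)$ follows at once, since $m^I(B)$ is object-intrinsic by Proposition~\ref{prop:faratio} and $c$ depends on neither $B$ nor $L$. The constant $c$ is also nonzero and finite, because $m_L^G(e_M)=1$ and $m^I(e_M)$ is a finite positive ratio of a nonzero force to a nonzero acceleration.

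For part (b), I track how each quantity transforms under a change of force unit. Take a new force standard $e_F'$ with $F' = \lambda F$ for every force-value; the uniqueness clause of Suppes' Representation Theorem says any admissible re-representation is of this form. Then:
\begin{itemize}
\item Gravitational mass is unchanged, since it is fixed by the mass standard $e_M$.
\item Accelerations are unchanged, since they are kinematic.
\item By Definition~\ref{def:definertialmass}, inertial mass rescales as $m^{I\prime}(B)=\lambda\, m^I(B)$.
\item Consistently, $\gamma_L' = \lambda\gamma_L$ by Eq.~(\ref{eq:relmassweight}), so $c' = c/\lambda$.
\end{itemize}
Choosing $\lambda=c$, which I take to be what "rescaling the force unit by a factor $1/c$" means (the unit shrinks, so numerical values grow by $c$), gives $m^{I\prime}(B)=c\,m^I(B)=m_L^G(B)$. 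This is Eq.~(\ref{eq:massequality}).

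The main obstacle is bookkeeping rather than substance. The hard part is checking that the change of force unit moves only $m^I$ and $\gamma_L$, and leaves the CSGE and the anchoring relation $\gamma_L=g_L(e_M)/c$ consistent. This rests on $m^G$ and $g_L$ being defined without reference to $e_F$, which holds by Definition~\ref{def:gravmass} and Proposition~\ref{prop:sinlawgalileo}.

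A minor point is positions with vanishing gravitational field, where the CSGE is unavailable. There I would define $m^G$ by continuity, or simply restrict the claim to $L$ with nonzero field, as Propositions~\ref{prop:faratio} and~\ref{prop:weightanchoring} already do.
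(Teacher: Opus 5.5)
Your proposal is correct and follows essentially the paper's route. The paper cites (i)$\Rightarrow$(ii) of Proposition~\ref{prop:wepvsmassequivalence}, which is exactly your CSGE-plus-anchoring computation with the same constant $c$, and then obtains (b) by rescaling force-values, and hence inertial masses, by $c$. On your zero-field caveat, restricting the claim is the right choice rather than defining $m_L^G$ by continuity, since the balance-scale definition is not operational where there is no weight.
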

\begin{proof}
	Equation~(\ref{eq:massequivalence}) is immediate from Proposition~\ref{prop:wepvsmassequivalence}. The constant c reconciles the chosen standards for gravitational mass and force. Rescaling the force unit by a factor $1/c$ rescales numerical force-values---and hence inertial-mass values---by $c$, yielding Eq.~(\ref{eq:massequality}).
\end{proof}

\bigskip
\noindent Under the PEPM and the WEP, the force unit can be chosen so that inertial and gravitational mass coincide everywhere. Thus, inertial mass confers object-intrinsicness upon gravitational mass, removing the positional degree of freedom left open by its localized operational definition. In particular, this position-invariance follows without appeal to any specific mechanism of gravity beyond its manifestation in the WEP.

\section{Newton's Second Law as Theoretical Identity}
With the two mass concepts unified, substitution in Eq.~(\ref{eq:opdefinertialmass}) shifts its physical content from operational definition to theoretical law.

\begin{proposition}[Newton's Second Law as Theoretical Identity] Assume the PEPM and WEP. Let gravitational mass be defined by the balance scale and force by the force balance with standards chosen so that $c=1$. Then
\begin{equation}\label{eq:n2l}
	F = m_L^G(B) a(B,F)
\end{equation}
is a theoretical identity between independently defined quantities.
\end{proposition}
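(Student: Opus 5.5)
The proof will be a substitution chain with a separate check of the word ``identity,'' which is a claim about the logical status of the equation and not only about its truth. The plan has three steps, in this order.

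First, I will establish that both sides of Eq.~(\ref{eq:n2l}) are well-defined and operationally independent. On the left, $F$ is the force-value fixed by the pulley-coupled force-balance. Proposition~\ref{prop:admissibilitysuppesforce} and Suppes' Representation Theorem make this value unique once the force standard $e_F$ is chosen. On the right, $m_L^G(B)$ comes from the balance-scale protocol. Proposition~\ref{prop:admissibilitysuppesmass} makes it well-defined relative to $e_M$. The acceleration $a(B,F)$ is a kinematic observable. By step (i) of Proposition~\ref{prop:faratio}, it does not depend on which representative of $F$ is used. None of these three definitions refers to the others, since they use distinct devices, distinct standards and distinct operations. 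This independence is what separates an identity from a tautology.

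Second, I will derive the equation. By Definition~\ref{def:definertialmass}, $m^I(B)=F/a(B,F)$ for every nonzero force-value $F$. Proposition~\ref{prop:faratio} guarantees that this ratio is independent of $F$ and is intrinsic to $B$. Under the PEPM and the WEP, Proposition~\ref{prop:massequivalence}(a) gives $m_L^G(B)=c\,m^I(B)$ for all $B$ and $L$. The hypothesis that the standards are chosen so that $c=1$ is exactly the rescaling in Proposition~\ref{prop:massequivalence}(b). Substituting Eq.~(\ref{eq:massequality}) into Eq.~(\ref{eq:opdefinertialmass}) and multiplying by $a(B,F)$ then gives $F=m_L^G(B)\,a(B,F)$. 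The relation holds for every $B$, every $L$ with nonzero gravitational field, and every nonzero finite $F$. As a side consequence, the index $L$ on $m_L^G$ is inert, by the position-invariance part of Proposition~\ref{prop:massequivalence}(a).

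Third, I will argue that the result is a theoretical identity and not a definition. Eq.~(\ref{eq:opdefinertialmass}) introduced a new symbol $m^I$, so it was definitional. Eq.~(\ref{eq:n2l}), by contrast, links three quantities that were all fixed beforehand by independent protocols. Whether it holds therefore depends on physical content, namely the CSGE combined with the WEP, and not on stipulation. I expect this step to be the main obstacle, because it is conceptual rather than computational. The difficulty is to show that fixing $c=1$ does not smuggle the law back in as a convention. I plan to handle this by showing that the choice $c=1$ only fixes the unit of $e_F$ relative to $e_M$. That choice amounts to a single scalar. It cannot create the substantive content, which is the universal proportionality across all bodies, all positions and all forces, and that content is secured by Propositions~\ref{prop:faratio} and~\ref{prop:wepvsmassequivalence}. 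A remaining loose end is the case $F=0$. Proposition~\ref{prop:faratio} excludes it, so I would either restrict the statement to nonzero forces or treat $F=0$ separately: there $a=0$, as follows from the PEPM through the absence of any driving capacity.
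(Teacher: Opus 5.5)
Your proposal is correct and takes the paper's route: the paper's entire proof is your second step, namely substituting the normalized mass equality of Proposition~\ref{prop:massequivalence}(b) into Definition~\ref{def:definertialmass}. Your first and third steps are elaborations the paper leaves to its surrounding remarks: independence of the primitives, $c=1$ as a mere unit choice, and the $F=0$ case. One caution: by Proposition~\ref{prop:relmassweight} the balance scale is itself a force balance for weights, so the independence you invoke rests on the distinct standards $e_M$ and $e_F$ and on the broader force domain, not on distinct devices.
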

\begin{proof} 
	Definition~\ref{def:definertialmass} and Proposition~\ref{prop:massequivalence}.
\end{proof}

\bigskip
\noindent Equation~(\ref{eq:n2l}) holds exactly and universally whenever the quantities are operationally defined in the instantaneous rest frame of the body. Transformations to arbitrary reference frames are governed by relativistic mechanics.

\section{Consequences}
Under the PEPM--WEP framework, any theory that modifies F=ma must violate either the PEPM or the WEP. Conversely, any principles sufficient to entail the PEPM and WEP also imply Newton's Second Law as theoretical identity.

\bigskip
\noindent
\textbf{Constraints on MOND.} Under the PEPM--WEP constraint, inertia is an object-intrinsic quantity. Modified-inertia formulations of MOND are therefore structurally inadmissible under that constraint. The exclusion is structural rather than phenomenological, restricting viable theories of galactic dynamics to the gravitational sector---chiefly dark matter or modified gravity.

\bigskip
\noindent
\textbf{First-principles grounding of the SI kilogram.} The Kibble balance realizes the SI kilogram through static weighing whose cross-location reproducibility hinges on the free-fall identity W=mg. Under the PEPM--WEP constraint, F=ma is a theoretical identity, providing first-principles grounding for that location-invariance, and hence for the SI mass standard.

\bigskip
\noindent
\textbf{Unification of WEP tests.} Under the PEPM, universal free-fall tests and mass-equivalence tests both probe the same epistemic content. Spaceborne free-fall experiments, such as MICROSCOPE, and terrestrial torsion-balance experiments, such as E{\"{o}}t-Wash, are therefore equally direct tests of the WEP; the distinction is no longer epistemic but practical: the space-access premium versus terrestrial noise-floor suppression.

\bigskip
\noindent
\textbf{Nonautonomy of Newtonian Dynamics.} The principles that give rise to Newton's Second Law---the PEPM and WEP---are mechanical manifestations of deeper physical principles: conservation of energy and the EEP. Consequently, Newton’s Second Law is not autonomous within mechanics, but emerges from principles beyond mechanics.
	
\section{Conclusion}
Newton's Second Law is not an irreducible axiom of nature but emerges as a theoretical identity between independently defined quantities. That identity follows from the PEPM and WEP through operational measurement and inclined-plane geometry. Suppes' formalism operationally defines gravitational mass and force as independent primitives, with weight occupying the privileged structural role that links them. The shared inclined-plane geometry bridges Stevin's statics and Galileo's kinematics, establishing inertial mass as an object-intrinsic quantity. Once the force unit is suitably normalized, inertial and gravitational mass coincide universally; substituting this equivalence into the operational definition of inertial mass yields Newton's Second Law. 

This reclassification constrains admissible dynamics: it excludes modified-inertia MOND and secures the SI kilogram's location-invariance. Under the PEPM alone, torsion-balance and free-fall experiments both constitute equally direct tests of the WEP. 

Ultimately, Newton’s Second Law is the mechanical manifestation of invariance principles deeper than mechanics itself: energy conservation and the Einstein Equivalence Principle.

\bibliographystyle{unsrt}
\bibliography{n2l-bibliography}

@book{newton1687,
	author		= "Newton, Isaac",
	title		= "Philosophi{\ae} Naturalis Principia Mathematica",
	year		= "1687",
	publisher	= "Jussu Societatis Regiae ac Typis Josephi Streater",
	address		= "London",
	note		= "English translation: \textit{The Mathematical Principles of Natural Philosophy}, trans. Motte, A., Benjamin Motte, London (1729)"
}

@article{milgrom1983,
  author		= "Milgrom, M.",
  title			= "A Modification of the {N}ewtonian Dynamics as a Possible Alternative to the Hidden Mass Hypothesis.",
  journal		= "Astrophysical Journal",
  volume		= "270",
  number		= "2",
  pages			= "365--389",
  year			= "1983"
}

@book{mach1883,
	author		= "Mach, Ernst",
	title		= "Die Mechanik in {i}hrer Entwicklung: Historisch-kritisch {d}argestellt.",
	year		= "1883",
	publisher	= "F.A. Brockhaus",
	address		= "Leipzig",
	note		= "English translation:\textit{The Science of Mechanics: A Critical and Historical Account of its Development}, trans. McCormack, T.J., Open Court Publishing, Chicago (1893)"
}

@book{poincare1902,
	author		= "Poincar{\'e}, Henri",
	title		= "La Science et l'Hypoth{\`e}se",
	year		= "1902",
	publisher	= "Flammarion",
	address		= "Paris",
	note		= "English translation: \textit{Science and Hypothesis}, trans. Greenstreet, W.J., Walter Scott Publishing Co, London (1905)"
}

@book{bridgman1927,
	author		= "Bridgman, P.W.",
	title		= "The Logic of Modern Physics",
	year		= "1927",
	publisher	= "Macmillan",
	address		= "New York"
}

@article{suppes1951,
	author		= "Suppes, P.",
	title		= "A Set of Axioms for Extensive Quantities.",
	journal		= "Portugaliae Mathematica",
	volume		= "10",
	number		= "4",
	pages		= "163--172",
	year		= "1951"
}

@article{krantz1973,
author		= "Krantz, D.H.",
title		= "Fundamental measurement of force and {N}ewton’s first and second laws of motion.",
journal		= "Philosophy of Science",
volume		= "40",
number		= "4",
pages		= "481--495",
year		= "1973"	
}

@article{mckinsey1953,
	author		= "{McKinsey, J.C.} and {Sugar, A.C.} and {Suppes, P.}",
	title		= "Axiomatic foundations of classical particle mechanics.",
	journal		= "Journal of Rational Mechanics and Analysis",
	volume		= "2",
	number		= "2",
	pages		= "253--272",
	year		= "1953"	
}

@incollection{suppes1974,
	author		= "Suppes, P.",
	title		= "The Structure of Theories and the Analysis of Data",
	editor		= "Suppe, F.",
	booktitle	= "The Structure of Scientific Theories",
	pages		= "266--283",
	address		= "Urbana, Illinois",
	publisher	= "University of Illinois Press",
	year		= "1974"
}

@techreport{noll1959,
	author		= "Noll, W.",
	title		= "The Foundations of Mechanics and Thermodynamics",
	address		= "Pittsburgh, PA",
	institution	= "Department of Mathematics, Carnegie Institute of Technology",
	year		= "1959",
	note		= "Reprinted in: \textit{The Foundations of Mechanics and Thermodynamics: Selected Papers.} Springer-Verlag, New York (1974)"
}

@book{fraassen2008,
	author		= "{van Fraassen}, B.C.",
	title		= "Scientific Representation: Paradoxes of Perspective",
	year		= "2008",
	publisher	= "Oxford University Press",
	address		= "Oxford"
}

@book{stevin1586,
	author		= "Stevin, Simon",
	title		= "De Beghinselen der Weeghconst",
	year		= "1586",
	publisher	= "Christoffel Plantijn",
	address		= "Leiden",
	note		= "English translation: \textit{The Principal Works of Simon Stevin; Vol I: Mechanics: The Art of Weighing and Motion}, trans. Dikshoorn, C., Dijksterhuis, E.J. Ed., C.V. Swets {\&} Zeitlinger, Amsterdam (1955)"
}

@book{galileo1638,
	author		= "Galilei, Galileo",
	title		= "Discorsi e dimostrazioni matematiche intorno a due nuove scienze.",
	year		= "1638",
	publisher	= "Elsevier",
	address		= "Leiden",
	note		= "English translation: \textit{Dialogues Concerning Two New Sciences}, trans. Crew, H., de Salvio, A., Macmillan, New York (1914)"
}

@article{einstein1907,
	author		= "Einstein, Albert",
	title		= "{\"{U}}ber das {R}elativit{\"{a}}tsprinzip und die aus demselben gezogenen {F}olgerungen.",
	journal		= "Jahrbuch der Radioaktivit{\"{a}}t und Elektronik",
	volume		= "4",
	number		= "4",
	pages		= "411--462",
	year		= "1907",	
	note		= "English translation: \textit{The Collected Papers of Albert Einstein, Vol 2: The Swiss Years: Writings, 1900-1909}, trans. Beck, A., Princeton University Press, Princeton (1989)"
}

\end{document}